\newtheorem{theorem}{Theorem}
\newtheorem{corollary}[theorem]{Corollary}
\newtheorem{proposition}[theorem]{Proposition}
\newtheorem{lemma}[theorem]{Lemma}
\newcommand{\comp}{\ensuremath{\textrm{cr}}\xspace}
\begin{document}

\author[1]{Spyros Angelopoulos\footnote{This research benefited from support of the FMJH Program PGMO and from the support to this program from Thales-EDF-Orange.}}

\date{}

\affil[1]{CNRS and Sorbonne   Universit\'e,  Laboratoire   d’Informatique  de  Paris   6, 4 Place Jussieu, Paris, France 75252.
{\tt spyros.angelopoulos@lip6.fr}
}

\title{Online Search With a Hint}

\maketitle

\begin{abstract}
The {\em linear search} problem, informally known as the {\em cow path} problem, is one of the fundamental problems in search theory. In this problem, an immobile target is hidden at some unknown position on an unbounded line, and a mobile searcher, initially positioned at some specific point of the line called the {\em root}, must traverse the line so as to locate the target. The objective is to minimize the worst-case ratio of the distance traversed by the searcher to the distance of the target from the root, which is known as the {\em competitive ratio} of the search. 

In this work we study this problem in a setting in which the searcher has a {\em hint} concerning the target.
 We consider three settings in regards to the nature of the hint: i) the hint suggests the exact position of the target on the line; ii) the hint suggests the direction of the optimal search (i.e., to the left or the right of the root); and iii) the hint is a general $k$-bit string that encodes some information concerning the target. Our objective is to study the {\em Pareto}-efficiency of strategies in this model. Namely, we seek optimal, or near-optimal tradeoffs between the searcher's performance if the hint is correct (i.e., provided by a trusted source) and if the hint is incorrect (i.e., provided by an adversary). 
\end{abstract}

\noindent
{\bf Keywords:} Search problems, linear search, competitive analysis, predictions.

\section{Introduction}
\label{sec:intro}

Searching for a target is a common task in everyday life, and an important computational problem with numerous applications.  Problems involving search arise in such diverse areas as drilling for oil in multiple sites, the forest service looking for missing backpackers, search-and-rescue operations in the open seas, and navigating a robot between two points on a terrain.  All these problems involve a mobile {\em searcher} 
which must locate an immobile {\em target}, often also called {\em hider}, that lies in some unknown point in the {\em search domain}, i.e, the environment in which the search takes place. The searcher starts from some initial placement within the domain, denoted by $O$, which we call the {\em root}. There is, also, some underlying concept of quality of search, in the sense that we wish, in informal terms, for the searcher to be able to locate the target as efficiently as possible.

One of the simplest, yet fundamental search problems is searching on an infinite line that is unbounded both to the left and to the right of the root. In this problem, which goes back to Bellman~\cite{bellman} and
Beck and Newman~\cite{beck:ls}, the objective is to find a search strategy that minimizes the {\em competitive ratio} of search. More precisely, 
let $S$ denote the {\em search strategy}, i.e., the sequence of moves that the searcher performs on the line. Given a target $t$, let $d(S,t)$ denote the total distance that the searcher has traveled up to the time it locates the target, and $d(t)$ the distance of $t$ from $O$.
We define the competitive ratio of $S$ as
\[
\comp(S) =\sup_t \frac{d(S,t)}{d(t)}.
\]
A strategy of minimum competitive ratio is called {\em optimal}. The problem of optimizing the competitive ratio of search on the line is known as the {\em linear search} problem (mostly within Mathematics and Operations Research), but is also known in Computer Science as the {\em cow path} problem. 

It has long been known that the optimal (deterministic) competitive ratio of linear search is 9~\cite{beck:yet.more}, and is derived by a simple {\em doubling} strategy. Specifically, let the two semi-infinite branches of the line be labeled with $0,1$ respectively. Then in iteration $i$, with $i \in \mathbb N$, the searcher starts from $O$, traverses branch $i \bmod 2$ to distance $2^i$, and returns to the root. 

Linear search, and its generalization, the $m$-ray search problem, in which the search domain consists of $m$ semi-infinite branches have been studied in several settings. Substantial work on linear search was done in the '70s and '80s predominantly by Beck and Beck, see e.g.,~\cite{beck1965more,beck1973return,beck1984linear,beck1986linear,beck1992revenge}. Gal showed that a variant of the doubling strategy is optimal for $m$-ray search~\cite{gal:general,gal:minimax}. These results were later rediscovered and extended in~\cite{yates:plane}.

Other related work includes the study of randomization~\cite{schuierer:randomized} and~\cite{ray:2randomized}; multi-searcher strategies~\cite{alex:robots}; searching with turn cost~\cite{demaine:turn,Angelopoulos2017};
the variant in which some probabilistic information on the target is
known~\cite{jaillet:online, informed.cows}; the related problem of designing {\em hybrid algorithms}~\cite{hybrid}; searching with an upper bound on the
distance of the target from the root~\cite{ultimate} and~\cite{revisiting:esa}; fault tolerant search~\cite{DBLP:conf/isaac/CzyzowiczGKKNOS16,kupavskii2018lower}; and performance measures beyond the competitive ratio~\cite{hyperbolic,oil,DBLP:conf/stacs/0001DJ19}. 
Competitive analysis has been applied beyond the linear and star search, for example in searching within a 
graph~\cite{koutsoupias:fixed,fleischer:online,stacs-expanding,ANGELOPOULOS2020781}.

\subsection{Searching with a hint}

Previous work on competitive analysis of deterministic search strategies has mostly assumed that the searcher has no information about the target, whose position is adversarial to the search. In practice, however, we expect that the searcher may indeed have some information concerning the target. For instance, in a search-and-rescue mission, there may be some information on the last sighting of the missing person, or the direction the person had taken when last seen. The question then is: how can the searcher leverage such information, and to what possible extent?

If the hint comes from a source that is trustworthy, that is, if the hint is guaranteed to be correct, then the performance of search can improve dramatically. For example, in our problem, if the hint is the branch on which the target lies, then the optimal search is to explore that branch until the target is found, and the competitive ratio is 1. There is, however, an obvious downside: if the hint is incorrect, the search may be woefully inefficient since the searcher will walk eternally on the wrong branch, and the competitive ratio in this case is unbounded.

We are thus interested in analyzing the efficiency of search strategies in a setting in which the hint may be compromised. To this end, we first need to define formally the concept of the hint, as well as an appropriate performance measure for the search strategy. In general, the hint $h$ is a binary string of size $k$, where the $i$-th bit is a response to a {\em query} $Q_i$. For example, one can define a single query $Q$ as ``Is the target within distance at most 100 from $O$?'' and  a one-bit hint, so that the hint answers a range query. For another example, if $Q$=``Is the target to the left or to the right of $O$?'', then a 1-bit hint informs the searcher about the direction it should pursue. From the point of view of upper bounds (positive results), we are interested in settings in which the queries and the associated hints have some natural interpretation, such as the ones given above. From the point of view of lower bounds (impossibility results), we are interested on the limitations of general $k$-bit hint strings which may be associated with {\em any} query, as we will discuss in more detail later.

Concerning the second issue, namely evaluating the performance of a search strategy $S$ with a hint $h$,
note first that $S$ is a function of $h$. We will analyze the competitiveness of $S(h)$ in a model in which the competitive ratio is not defined by a single value, but rather by a pair $(c_{S,h},r_{S,h})$. The value $c_{S,h}$ describes the competitive ratio of $S(h)$ assuming that $h$ is {\em trusted}, and thus guaranteed to be correct. The value $r_{S,h}$ describes the competitive ratio of $S(h)$ when the hint is given by an {\em adversarial} source. 
More formally, we define
\begin{equation}
c_{S,h}= \sup_{t} \inf_{h} \frac{d(S(h),t,h)}{d(t)} ,
      \quad  \textrm{and} \quad
r_{S,h} = \sup_{t} \sup_{h} \frac{d(S(h),t,h)}{d(t)},
\label{eq:definition}
\end{equation} 
where $d(S(h),t,h)$ denotes the distance traversed in $S(h)$ for locating a target $t$ with a hint $h$. We will call $c_{S,h}$ the {\em consistency} of $S(h)$, and $r_{S,h}$ the {\em robustness} of $S(h)$. To simplify notation, we will often write $S$ instead of $S(h)$ when it is clear from context that we refer to a strategy with a hint $h$.

For example, if the hint $h$ is the branch on which the target lies, then the strategy that always trusts the hint is $(1,\infty)$ competitive, whereas the strategy that ignores the hint entirely is $(9,9)$-competitive. Our objective is then to find strategies that are provably {\em Pareto-optimal} or {\em Pareto-efficient} in this model, and thus identify the strategies with the best tradeoff between robustness and consistency.

Our model is an adaptation, to search problems, of the untrusted advice framework for online algorithms 
proposed by Angelopoulos {\em et al.}~\cite{DBLP:conf/innovations/0001DJKR20}. In their work, the online algorithm is given some additional information, or {\em advice} which may, or may not be correct.
To the best of our knowledge, our setting is a first attempt to quantify, in an adversarial setting, the impact of general types of {\em predictions} in search games. It is also in line  with recent advances on improving the performance of online algorithms using predictions, such as the work of Lykouris and Vassilvitskii~\cite{DBLP:conf/icml/LykourisV18}, who introduced the concepts of consistency and robustness in the context of paging, and the work of Purohit {\em et al.}~\cite{NIPS2018_8174}, who applied it to general online problems. Our framework for the $k$-bit hint is also related to other work in Machine Learning, such as clustering with $k$ noisy queries, e.g., the work of Mazumdar and Saha~\cite{NIPS2017_7161}.

It should be emphasized that there is previous work that has studied the impact of specific types of hints on the performance of search strategies, such as bounds on the maximum or the minimum distance of the target from the root, e.g.,~\cite{ultimate,revisiting:esa,jaillet:online}. However, the hint in these works is always assumed to be trusted and correct.

\subsection{Contribution}

In this work we study the power of limitations of linear search with hints. 
Let $r \geq 9$ be a parameter that in general will denote the robustness of a search strategy, 
and let $b_r$ be defined as 
\[
b_r= \frac{\rho_r+\sqrt{\rho_r^2-4\rho_r}}{2} ,\quad \textrm{where $\rho_r=(r-1)/2$}.
\]
We consider the following classes of hints:

\medskip
\noindent
{\em $\bullet$ \ The hint is the position of the target.} Here, the hint describes the exact location of the target on the line: its distance from $O$, along with the branch (0 or 1) on which it lies. We present a strategy that is  $(\frac{b_r-1}{b_r+1}, r)$-competitive, and we prove it is Pareto-optimal.

\medskip
\noindent
{\em $\bullet$ \ The hint is the branch on which the target lies.} Here, the hint is information on whether the searcher is to the left or to the right of the root. We present a strategy that, given parameters $b>1$ and 
$\delta\in(0,1)$, has consistency
$c=1+2\cdot (\frac{b^2}{b^2-1}+ \delta \frac{b^3}{b^2-1})$, and robustness 
$r=1+2\cdot ( \frac{b^2}{b^2-1}+\frac{1}{\delta} \frac{b^3}{b^2-1})$.
Again, we prove that this strategy is Pareto-optimal. 

\medskip
\noindent
{\em $\bullet$ \ The hint is a general $k$-bit string.} In the previous settings, the hint is a single bit, which answers the corresponding query. Here we address the question: how powerful can be a single-bit hint, or more generally a $k$-bit hint? In other words, how powerful can $k$ binary queries be for linear search?
We give several upper and lower bounds on the competitiveness of strategies in this setting. First, we look at the case of a single-bit hint. Here, we give a 9-robust strategy that has consistency at most $1+4\sqrt{2}$, whereas we show that no 9-robust strategy can have conistency less than 5, for {\em any} associated query. For general robustness $r$, we give upper and lower bounds that apply to some specific, but broadly used class of strategies, including {\em geometric} strategies (see Section~\ref{sec:prelim} for a definition and Theorem~\ref{thm:asymptotic} for the statement of the result). For general $k$, and for a given $r\geq 9$, we give an $r$-robust strategy whose consistency decreases rapidly as function of $k$ (Proposition~\ref{prop:k.upper}).
\medskip

In terms of techniques, for the first setting described above (in which the hint is the position of the target), the main idea is to analyze a geometric strategy with ``large'' base, namely $b_r$, for $r\geq 9$. The technical difficulty here is the lower bound; to this end, we prove a lemma that shows, intuitively, that for any $r$-robust strategy, the search length of the $i$-ith iteration cannot be too big compared to the previous search lengths (Lemma~\ref{lemma:helper.bb}). This technical result may be helpful in more broad settings (e.g., we also apply it in the setting in which the advice is a general $k$-bit string).

Concerning the second setting, in which the hint describes the branch, we rely on tools developed by Schuierer~\cite{schuierer:lower} for lower-bounding the performance of search strategies; more precisely on a theorem for lower-bounding the supremum of a sequence of functionals. But unlike~\cite{schuierer:lower}, we use the theorem in a {\em parameterized} manner, that allows us to express the tradeoffs between the consistency and the robustness of a strategy, instead of their average. 

Concerning the third, and most general setting, our upper bounds (i.e., the positive results) come from a strategy that has a natural interpretation: it determines a partition of the infinite line into $2^k$ {\em subsets}, and the hint describes the partition in which the target lies. The lower bounds (negative results) come from information-theoretic arguments, as is typical in the field of {\em advice complexity} of online 
algorithms (see, e.g., the survey~\cite{Boyar:survey:2016}). 

The broader objective of this work is to initiate the study of search games with some limited, but potentially untrusted information concerning the target. As we will show, the problem becomes challenging even in a simple search domain such as the infinite line. The framework should be readily applicable to other search games, and the analysis need not be confined to the competitive ratio, or to worst-case analysis. For example, search games in bounded domains are often studied assuming a probability distribution on the target, with the objective to minimize the expected search time (for several such examples see the book~\cite{searchgames}).
However, very little work has addressed the setting in which the searcher may have access to hints, such as the
{\em High-Low} search games described in Section 5.2 of~\cite{searchgames}, in which a searcher wants to locate a hider on the unit interval by a sequence of guesses. Again, our model is applicable, in that one would like to find the best tradeoff on the expected time to locate the target assuming a trusted or untrusted hint.

\section{Preliminaries}
\label{sec:prelim}

In the context of searching on the line, a search strategy $X$ can be defined as an infinite set of pairs
$(x_i,s_i)$, with $i \in \mathbb N$, $x_i \in \mathbb{R}_{\geq 1}$ and $s_i\in\{0,1\}$. We call $i$
an {\em iteration} and $x_i$ the {\em length} of the $i$-th search {\em segment}. 
More precisely, in the $i$-th iteration, the searcher starts from the root $O$, traverses branch $s_i \bmod 2$ up to distance $x_i$ from $O$, then returns to $O$. It suffices to focus on strategies for which $x_{i+2} \geq x_i$, i.e., in any iteration the searcher always searches a new part of the line. We will sometimes omit the $s_i$'s from the definition of the strategy, if the direction is not important, i.e., the searcher can start by moving either to the left or to the right of $O$. In this case, there is the implicit assumption that $s_i$ and $s_{i+1}$ have complementary parities, since any strategy that revisits the same branch in consecutive iterations can be transformed to another strategy that is no worse, and upholds the assumption. We make the standing assumption that the target lies within distance at least a fixed value, otherwise every strategy has unbounded competitive ratio. In particular, we will assume that $t$ is such that $d(t) \geq 1$.

Given a strategy $X=(x_0,x_1, \ldots)$ (which we will denote by $X=(x_i)$, for brevity), its competitive ratio
is given by the expression 
\begin{equation}
\comp(X)=1+2\sup_{i \geq 0} \frac{\sum_{j=1}^{i}x_{j}}{x_{i-1}},
\label{eq:comp.X}
\end{equation}
where $x_{-1}$ is defined to be equal to 1. This expression is obtained by considering all the worst-case positions of the target, namely immediately after the turn point of the $i$-th segment (see e.g.,~\cite{schuierer:lower}).

Geometric sequences are important in search problems, since they often lead to efficient, or optimal strategies
(see, e.g., Chapters 7 and 9 in~\cite{searchgames}). We call the search strategy $G_b=(b^i)$ {\em geometric with base $b$}. From~\eqref{eq:comp.X}, we obtain that
\begin{equation}
\comp(G_b)=1+2\frac{b^2}{b-1}.
\label{eq:comp.geometric}
\end{equation}
For example, for the standard doubling strategy in which $x_i=2^i$, hence $b=2$, the above expression implies a competitive ratio of 9.

For any $r\geq 4$ define $\rho_r$ to be such that $r=1+2\rho_r$, thus $\rho_r=(r-1)/2$. Moreover, from~\eqref{eq:comp.geometric} and the definition of $b_r$, we have that $\comp(G_{{b_r}})=r$.

In the context of searching with a hint, we will say that a strategy is $(c,r)$-competitive if it has consistency at most $c$ and robustness at most $r$; equivalently we say that the strategy is $c$-consistent and $r$-robust. 
Clearly, an $r$-robust strategy gives rise to a strategy with no hints, and with competitive ratio at most $r$.

We conclude with some definitions that will be useful in Section~\ref{sec:direction}.
Let $X=(x_0,x_1,\ldots)$ denote a sequence of positive numbers. We define $\alpha_X$ as
\[
\alpha_X =\overline{\lim}_{n \rightarrow \infty} x_n^{1/n}.
\]
We also define as $X^{+i}$ the subsequence of $X$ starting at $i$, i.e, $X^{+i}=(x_i,x_{i+1},\ldots)$.
Last, we define the sequence 
$G_b(\gamma_0, \ldots \gamma_{n-1})$ as
\[
G_b(\gamma_0, \ldots \gamma_{n-1})= (\gamma_0, \gamma_1 a, \gamma_2 a^2, \ldots \gamma_{n-1} a^{n-1}, \gamma_0 a^n, \gamma_1 a^{n+1}, \ldots ).
\]

\section{Hint is the position of the target}
\label{sec:position}

In this section we study the setting in which the hint is related to the exact position of the target. Namely, the hint $h$ describes the distance $d(t)$ of the target $t$ from the root, as well as the branch on which it hides.  For any $r \geq 9$, we will give a strategy that is $(\frac{b_r+1}{b_r-1}, r))$-competitive. Moreover, we will show that this is Pareto-optimal. We begin with the upper bound.

\begin{theorem}
For any $r\geq 9$ there exists a $(\frac{b_r+1}{b_r-1}, r)$-competitive strategy for linear search in which the hint is the position of the target.
\label{thm:position.upper}
\end{theorem}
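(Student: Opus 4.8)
The plan is to use, for each hint, a geometric search with base $b_r$ whose scales are \emph{aligned} to the hinted distance, so that the searcher reaches the hinted position with the least possible overhead while retaining the robustness of a plain geometric strategy. Concretely, given a hint $h=(d,s)$ claiming the target at distance $d$ on branch $s$, I would define $S(h)$ to be the search with segment lengths $d\cdot b_r^{k}$ for $k\in\mZ$, where the $k$-th segment explores branch $(s+k)\bmod 2$ and segments are performed in increasing order of $k$; in particular the segment of length exactly $d$ (the case $k=0$) lies on branch $s$. This is nothing but $G_{b_r}$ rescaled by $d$ and phase-shifted so that distance $d$ on branch $s$ is a turn point. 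Since the standing assumption gives $d(t)\geq 1$, I would truncate the construction by discarding all segments of length below $1$, so that the search starts from a well-defined smallest segment.

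For the robustness the key observation is that the competitive ratio is scale-invariant: replacing every $x_i$ by $d\,x_i$ leaves the ratio $\sum_{j}x_j / x_{i-1}$ in~\eqref{eq:comp.X} unchanged. Hence the (untruncated) search has exactly the competitive ratio of $G_{b_r}$, which equals $r$ by the identity $\comp(G_{b_r})=r$. A direct summation confirms this against an arbitrary target: a target at distance $D$ is found at the first suitable index $k^\ast$, after traversing the converging tail $\sum_{k<k^\ast} d\,b_r^{k}=d\,b_r^{k^\ast}/(b_r-1)$ twice, so the ratio is $\bigl(2d\,b_r^{k^\ast}/(b_r-1)+D\bigr)/D$, whose supremum (taken as $D\to d\,b_r^{k^\ast-2}$) is $1+2b_r^2/(b_r-1)=r$. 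This holds for \emph{every} target and therefore for every, possibly adversarial, hint, so $\sup_t\sup_h$ is at most $r$. Discarding segments below distance $1$ only decreases the distance travelled to reach any target at distance $\geq 1$, hence preserves $r$-robustness; the one case needing a direct check is a target at distance in $[1,b_r)$, for which the bound $2b_r+1\leq r$ suffices.

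For the consistency, suppose the hint is correct, so the true target is at $(d,s)$ and is located exactly when the searcher completes the $k=0$ segment. At that moment it has traversed every smaller segment twice, for total overhead $2\sum_{k<0} d\,b_r^{k}=2d/(b_r-1)$, and then walks distance $d$ to the target, giving total distance $d\bigl(1+\tfrac{2}{b_r-1}\bigr)=\tfrac{b_r+1}{b_r-1}\,d$ and ratio $\tfrac{b_r+1}{b_r-1}$, independent of $d$ and $s$. Taking the supremum over targets yields consistency $\tfrac{b_r+1}{b_r-1}$ (after truncation this value is approached as $d\to\infty$ and never exceeded). Combining the two bounds produces the claimed $(\tfrac{b_r+1}{b_r-1},r)$-competitive strategy.

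The step I expect to be most delicate is the \emph{realizability} of the construction: the aligned geometric search formally contains infinitely many arbitrarily short segments accumulating at the root, so the substantive part of the argument is to verify that truncating at distance $1$ (legitimate precisely because $d(t)\geq 1$) leaves both the consistency and the robustness bounds intact, rather than introducing a boundary loss in the first few segments. Everything else reduces to the scale-invariance of $\comp$ together with the already-established value $\comp(G_{b_r})=r$.
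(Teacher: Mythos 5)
Your proposal is correct and takes essentially the same route as the paper: both rescale and phase-align the base-$b_r$ geometric strategy so that the hinted position $d(t)$ coincides with a turn point on the hinted branch, derive $r$-robustness from the scale invariance of the competitive ratio of $G_{b_r}$, and obtain consistency $\frac{b_r+1}{b_r-1}$ by summing the geometric tail traversed before the locating segment. The only difference is presentational --- the paper defines the strategy directly as $G_{b_r}$ scaled down by $\lambda = b_r^{j_t}/d(t) \in [1, b_r^2)$, so the sequence has a genuine first segment from the outset and your bi-infinite-then-truncate step (and the attendant check for targets at distance in $[1,b_r)$) never arises, but the underlying construction and both bounds are identical.
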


\begin{proof}
From the hint $h$, we have as information the distance $d(t)$ as well as the branch $t$ on which the target $t$ lies; without loss of generality, suppose that this branch is the branch 0. Recall that this information may or
may not be correct, and the searcher is oblivious to this. 

Consider the geometric strategy $G_{b_r}=(b_r^i)$, with $i\in \mathbb{N}$, and recall that 
$G_{b_r}$ is $r$-robust (as discussed in Section~\ref{sec:prelim}). There must exist an index $j_t$ such that 
\[
b_r^{j_t-2} < d(t) \leq b_r^{j_t},
\]
Define $\lambda=b_r^{j_t}/d(t) \geq 1$,  and let $G'$ denote the strategy 
\[
G'=(\{\frac{1}{\lambda} b_r^i, s_i\}),
\]
where the $s_i$'s are defined such that that $s_{i+1} \neq s_i$, for all $i$, and $s_{j_t}=0$.

In words, $G'$ is obtained by ``shrinking'' the search lengths of $G_{b_r}$ by a factor equal to $\lambda$, and by choosing the right parity of branch for starting the search, in a way that, if the hint is trusted, then in $G'$ the searcher will locate the target right as it is about to turn back to $O$ at the end of the $j_t$-th iteration.

Since $G_{b_r}$ is $r$-robust, so is the scalled-down strategy $G'$. It remains then to bound the
consistency $c_{G'}$ of $G'$. Suppose that the hint is trusted. We have that 
\[
d(G',t)=\frac{1}{\lambda}(2\sum_{i=0}^{j_t-1} b_r^i +b_r^{j_t}),
\]
and since $d(t)=b_r^{j_t}/\lambda$ we can bound $c_{G'}$ from above by
\[
\frac{d(G',t)}{d(t)}= 1+2\frac{b_r^{j_t}-1}{b_r^{j_t}(b_r-1)}\leq 1+\frac{2}{b_r-1}=\frac{b_r+1}{b_r-1}. 
\]
We conclude that $G'$ is $(\frac{b_r+1}{b_r-1}, r)$-competitive. 
\end{proof}

Next, we will show that the strategy of Theorem~\ref{thm:position.upper} is Pareto-optimal. To this end, we will need a technical lemma concerning the segment lengths of any $r$-robust strategy.

\begin{lemma}
For $r$-robust strategy $X=(x_i)$, it holds that
\[
x_i \leq (b_r+\frac{b_r}{i+1})x_{i-1},
\]
for all $i \geq 1$, where $x_{-1}$ is defined to be equal to 1.
\label{lemma:helper.bb}
\end{lemma}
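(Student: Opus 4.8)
The plan is to exploit the competitive ratio formula \eqref{eq:comp.X} directly. Since $X=(x_i)$ is $r$-robust, its competitive ratio (with no hint, or equivalently in the adversarial case) is at most $r$, so from \eqref{eq:comp.X} we have
\[
1+2\sup_{k \geq 0} \frac{\sum_{j=1}^{k}x_{j}}{x_{k-1}} \leq r = 1+2\rho_r,
\]
which means $\sum_{j=1}^{k}x_j \leq \rho_r \, x_{k-1}$ for every $k \geq 0$. This single family of inequalities is the only hypothesis I have to work with, so the entire proof must be squeezed out of it.

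First I would isolate the $k=i$ instance, $\sum_{j=1}^{i} x_j \leq \rho_r \, x_{i-1}$, and peel off the last term to get $x_i \leq \rho_r x_{i-1} - \sum_{j=1}^{i-1} x_j$. The term $\sum_{j=1}^{i-1}x_j$ is a nuisance that I want to lower-bound in terms of $x_{i-1}$, so that it can be subtracted off and yield a clean bound of the shape $x_i \leq (\text{something}) \cdot x_{i-1}$. The natural idea is to set up an inductive hypothesis: suppose I have already shown $x_{j} \leq \beta_j x_{j-1}$ for suitable constants $\beta_j$ (morally $\beta_j \approx b_r + b_r/(j+1)$). Then each earlier $x_j$ can be related back to $x_{i-1}$ by chaining these ratios, giving a lower bound on $\sum_{j=1}^{i-1} x_j / x_{i-1}$ as a sum of reciprocal products of the $\beta$'s. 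Plugging that lower bound into $x_i \leq \rho_r x_{i-1} - \sum_{j=1}^{i-1}x_j$ should produce $x_i \leq \bigl(\rho_r - (\text{partial sum})\bigr) x_{i-1}$, and the content of the lemma is that this bracket is at most $b_r + b_r/(i+1)$.

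The crux is then an arithmetic identity tying $b_r$ to $\rho_r$. Recall $b_r$ solves $b_r^2 - \rho_r b_r + \rho_r = 0$ (this is exactly \eqref{eq:comp.geometric} rewritten, since $\comp(G_{b_r})=r$ forces $\rho_r = b_r^2/(b_r-1)$), so $\rho_r = b_r + \rho_r/b_r = b_r + 1/(1 - 1/b_r)$, and more usefully $\rho_r - b_r = b_r/(b_r-1)$. I expect the inductive bookkeeping to reduce to showing that the accumulated lower bound on the tail sum, subtracted from $\rho_r$, telescopes down to exactly $b_r + b_r/(i+1)$; the extra additive slack $b_r/(i+1)$ over the ``ideal'' value $b_r$ is what makes the induction close, since a bare bound $x_i \leq b_r x_{i-1}$ would be false for small $i$ (e.g.\ $i=1$ gives $x_1 \leq \rho_r$, and $\rho_r = b_r + b_r/(b_r-1)$ can exceed $2b_r$ only when $b_r<2$, so the $i+1$ denominator must be calibrated precisely). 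I would verify the base case $i=1$ by hand: there $\sum_{j=1}^{1}x_j = x_1 \leq \rho_r x_0 \leq \rho_r x_0$, and I must check $\rho_r \leq b_r + b_r/2 = \tfrac{3}{2}b_r$, i.e.\ $b_r/(b_r-1) \leq \tfrac{1}{2}b_r$, i.e.\ $b_r \geq 3$, which holds since $r \geq 9$ gives $b_r \geq 2$; I would track whether the correct base value forces $b_r \geq 2$ rather than $3$ and adjust the claimed constant accordingly.

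The main obstacle I anticipate is controlling the tail sum $\sum_{j=1}^{i-1} x_j$ tightly enough: a crude bound (e.g.\ ignoring it, or bounding it by a geometric series with ratio $1/b_r$) will give the wrong constant, because the $1/(i+1)$ correction is delicate and comes precisely from the finite-sum / boundary effects at small indices rather than from the asymptotic geometric behavior. I would therefore structure the induction so that the inductive hypothesis carries the exact ratio bound $x_{j-1} \leq (b_r + b_r/j)^{-1}\,$-type relation forward, and verify that the resulting recurrence on the bracket $\rho_r - \sum(\cdots)$ is monotone and converges to $b_r$ from above at rate $\Theta(1/i)$. If the clean $1/(i+1)$ form resists a direct telescoping, the fallback is to prove the weaker but sufficient statement by strong induction, using the defining quadratic $b_r^2 = \rho_r(b_r-1)$ repeatedly to simplify, and to confirm the advertised coefficient by checking it is consistent with the equality case of the geometric strategy $G_{b_r}$ itself (where $x_i = b_r^i$ makes \eqref{eq:comp.X} tight), since any extremal $r$-robust sequence should saturate this bound in the limit.
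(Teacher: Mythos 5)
Your outline coincides with the paper's own proof --- induction on $i$, chaining the ratio bounds $x_j \leq (b_r+\frac{b_r}{j+1})x_{j-1}$ to lower-bound the prefix sum by reciprocal products (the paper's quantity $P_i$), and reducing everything to showing that the bracket $\rho_r-1-P_i$ is at most $b_r+\frac{b_r}{i+2}$ via the identity $\rho_r=\frac{b_r^2}{b_r-1}$ --- but the execution has a genuine error at the base case. You start from $\sum_{j=1}^{k}x_j \leq \rho_r\,x_{k-1}$, taking the index $j=1$ in \eqref{eq:comp.X} literally; the robustness constraint the paper's proof actually uses includes $x_0$: a target placed just past $x_{k-1}$ costs $2\sum_{j=0}^{k}x_j + x_{k-1}$, so $\sum_{j=0}^{k}x_j \leq \rho_r\,x_{k-1}$ (this is exactly the step $x_{i+1}+\sum_{j=0}^{i-1}x_j\leq(\rho_r-1)x_i$ in the paper). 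With your weakened version, at $i=1$ you only get $x_1\leq \rho_r x_0$ and must check $\rho_r\leq\tfrac{3}{2}b_r$, i.e.\ $b_r\geq 3$; your justification ``which holds since $r\geq 9$ gives $b_r\geq 2$'' is a non sequitur and is false in the flagship case: at $r=9$ one has $b_r=2$ exactly, so $\rho_r=4>3=\tfrac{3}{2}b_r$, and your base-case verification fails for every $r\in[9,10)$. Retaining the $x_0$ term repairs it: $x_0+x_1\leq\rho_r x_0$ gives $x_1\leq(\rho_r-1)x_0$, and $\rho_r-1\leq\tfrac{3}{2}b_r$ reduces to $(b_r-2)(b_r+1)\geq 0$, true for all $b_r\geq 2$. (You also never anchor the induction at $i=0$, which the paper establishes as $x_0\leq 2b_r$ from a target at distance $1+\epsilon$ together with the convention $x_{-1}=1$.)

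The second gap is at the crux. After the substitution you correctly arrive at a bound of the form $x_{i+1}\leq(\rho_r-1-P_i)x_i$, and the entire lemma rests on the inequality $P_i\geq \rho_r-1-b_r\frac{i+3}{i+2}$, i.e.\ the paper's \eqref{eq:bb.2}; your proposal only asserts that this ``should telescope'' and offers an unexecuted fallback by strong induction. This is not a routine verification: the inequality is asymptotically tight --- both sides converge to $\frac{1}{b_r-1}$ --- so any estimate that loses a constant (for instance bounding $P_i$ by a clean geometric series with ratio $1/b_r$, which you mention and rightly distrust) destroys the bound. The paper disposes of it by a second, nested induction: the recurrence $P_{i+1}=\frac{1}{b_r+b_r/(i+2)}(1+P_i)$, repeated use of $\rho_r=\frac{b_r^2}{b_r-1}$, and a final monotonicity argument showing that $\frac{i+2}{i+3}\frac{1}{b_r-1}+\frac{i+3}{i+2}$ is decreasing in $i$ with limit $\frac{b_r}{b_r-1}$, so the required inequality holds for every finite $i$. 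Your diagnosis of where the difficulty lies is accurate, and your identity $\rho_r-b_r=\frac{b_r}{b_r-1}$ is the right lever, but until that nested induction (or an equivalent exact calculation) is actually carried out, the proof has a hole precisely at its load-bearing step.
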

\begin{proof}
The proof is by induction on $i$. We first show the claim for $i=0$. Since $X$ is $r$-robust, for a target placed at distance $1+\epsilon$ from the root, it must be that 
\[
1+2x_0 \leq r = 1+2 \frac{b_r^2}{b_r-1} \Rightarrow x_0 \leq \frac{b_r^2}{b_r-1} \leq 2b_r,
\]
where the last inequality follows from the fact that $b_r \geq 2$. Thus, the base case holds.
 
For the induction hypothesis, suppose that the claim holds for all $j \leq i$, that is
$x_{j} \leq (b_r+\frac{b_r}{j+1})x_{j-1}$, for all $j \leq i$. This implies
that
\begin{equation}
x_{i-j} \geq \frac{1}{\prod_{k=0}^{j-1} (b_r+\frac{b_r}{i+1-k})}x_i.
\label{eq:bb.1}
\end{equation}
We will show that the claim holds for $i+1$. From the $r$-robustness of strategy $X$ we have that
\[
\frac{\sum_{j=0}^{i+1} x_j}{x_i} \leq \rho_r \Rightarrow x_{i+1} +\sum_{j=0}^{i-1} x_j \leq (\rho_r-1)x_i,
\]
and substituting $x_0, \ldots ,x_{i-1}$ using~\eqref{eq:bb.1}, we obtain that
\[
x_{i+1} \leq (\rho_r-1-P_i)x_i, \quad \mbox{ where } P_i=\sum_{j=0}^{i-1} \frac{1}{\prod_{k=0}^{j-1} (b_r+\frac{b_r}{i+1-k})}.
\]
It then suffices to show that
\begin{equation}
\rho_r-1-P_i \leq b_r+\frac{b_r}{i+2} \quad \mbox{or equivalently } \ P_i \geq \rho_r-1-b_r\frac{i+3}{i+2}.
\label{eq:bb.2}
\end{equation}
We will prove~\eqref{eq:bb.2} by induction on $i$. For $i=-1$,~\eqref{eq:bb.2} is equivalent to 
\[
2b_r\geq \frac{b_r^2}{b_r-1}-1,
\]
which can be readily verified from the fact that $b_r \geq 2$. Assuming then that~\eqref{eq:bb.2} holds
for $i$, we will show that it holds for $i+1$.
We have 
\begin{align*}
P_{i+1} &=\frac{1}{b_r+\frac{b_r}{i+2}} (1+P_i) \tag{From the definition of $P_i$} \\
&\geq \frac{1}{b_r+\frac{b_r}{i+2}} (1+\rho_r-1-b_r \frac{i+3}{i+2}) \tag{From induction hypothesis}\\
&= \frac{i+2}{b_r(i+3)} (\frac{b_r^2}{b_r-1}-b_r\frac{i+3}{i+2}) \tag{Since $\frac{b_r^2}{b_r-1}=\rho_r$}\\
&> \frac{i+2}{i+3} \frac{b_r}{b_r-1} -1.
\end{align*}
To complete the proof of this lemma, it remains to show that
\[
\frac{i+2}{i+3} \frac{b_r}{b_r-1} -1 \geq \rho_r-1-b_r\frac{i+3}{i+2},
\]
or equivalently, by substituting $\rho_r$ with
the expression $\frac{b_r^2}{b_r-1}$, that
\[
\frac{i+2}{i+3} \frac{1}{b_r-1} +\frac{i+3}{i+2} \geq \frac{b_r}{b_r-1}.
\]
The lhs of the above expression is decreasing in $i$, for every $b_r \geq 2$, thus the lhs is at least
\[
\lim_{i\to \infty} (\frac{i+2}{i+3} \frac{1}{b_r-1} +\frac{i+3}{i+2})=\frac{b_r}{b_r-1},
\]
which concludes the proof.
\end{proof}

We obtain a useful corollary concerning the sum of the first $i-1$ search lengths of an $r$-robust strategy.
\begin{corollary}
For any $r$-robust strategy $X=(x_i)$, it holds that
\[
\sum_{j=0}^{i-1} x_j \geq \frac{x_i}{1+\frac{1}{i+1}}(\frac{b_r}{b_r-1}-\frac{i+2}{i+1}),
\]
and for every $\epsilon \in (0,1]$, there exists $i_0$ such that for all $i>i_0$, $\sum_{j=0}^{i-1} x_j 
\geq (\frac{1}{b_r-1}-\epsilon) x_i$.
\label{cor:sum}
\end{corollary}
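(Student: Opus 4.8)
The plan is to obtain the corollary almost entirely by reading off quantities already established inside the proof of Lemma~\ref{lemma:helper.bb}. Recall that there the one-step bound $x_i \le (b_r+\frac{b_r}{i+1})x_{i-1}$ was iterated to produce inequality~\eqref{eq:bb.1}, namely $x_{i-j}\ge x_i/\prod_{k=0}^{j-1}(b_r+\frac{b_r}{i+1-k})$, and the auxiliary sum $P_i=\sum_{j=0}^{i-1}1/\prod_{k=0}^{j-1}(b_r+\frac{b_r}{i+1-k})$ was introduced. Summing~\eqref{eq:bb.1} over the relevant range of $j$ gives exactly $\sum_{j=0}^{i-1}x_j\ge P_i\,x_i$, so the whole task reduces to lower-bounding $P_i$ together with a cosmetic rewriting of the claimed right-hand side.

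For the first inequality, I would first simplify the target bound: since $1+\frac{1}{i+1}=\frac{i+2}{i+1}$, its right-hand side equals $x_i\big(\frac{i+1}{i+2}\frac{b_r}{b_r-1}-1\big)$. It therefore suffices to show $P_i\ge \frac{i+1}{i+2}\frac{b_r}{b_r-1}-1$. But this is precisely the estimate proved by the inner induction in Lemma~\ref{lemma:helper.bb}: there, after shifting the index by one and using $\rho_r=\frac{b_r^2}{b_r-1}$, it was shown that $P_i\ge\frac{i+1}{i+2}\frac{b_r}{b_r-1}-1$. Chaining $\sum_{j=0}^{i-1}x_j\ge P_ix_i$ with this lower bound yields the first claim. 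The only delicate point is the index bookkeeping when summing~\eqref{eq:bb.1}, i.e. making sure the range of $j$ matches the indices $x_0,\dots,x_{i-1}$; this is the one genuinely fiddly step, and the one I would write out most carefully. Everything else is substitution.

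For the asymptotic statement, I would simply let $i\to\infty$ in the coefficient just obtained. Writing $f(i)=\frac{i+1}{i+2}\frac{b_r}{b_r-1}-1$, one checks that $\frac{1}{b_r-1}-f(i)=\frac{b_r}{b_r-1}\big(1-\frac{i+1}{i+2}\big)=\frac{b_r}{(b_r-1)(i+2)}$, using $1+\frac{1}{b_r-1}=\frac{b_r}{b_r-1}$. Hence $f(i)$ increases monotonically to $\frac{b_r}{b_r-1}-1=\frac{1}{b_r-1}$ from below. Given $\epsilon\in(0,1]$, it is then enough to take $i_0$ large enough that $\frac{b_r}{(b_r-1)(i+2)}\le\epsilon$, for instance $i_0=\lceil \frac{b_r}{\epsilon(b_r-1)}\rceil$; for all $i>i_0$ we get $\sum_{j=0}^{i-1}x_j\ge f(i)x_i\ge(\frac{1}{b_r-1}-\epsilon)x_i$, as required.

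Overall this is a corollary-level argument with no substantial obstacle: the work is confined to recognizing that the two ingredients $\sum_{j=0}^{i-1}x_j\ge P_ix_i$ and the lower bound on $P_i$ are already contained in the proof of Lemma~\ref{lemma:helper.bb}, to the elementary rewriting $\frac{1}{1+1/(i+1)}(\frac{b_r}{b_r-1}-\frac{i+2}{i+1})=\frac{i+1}{i+2}\frac{b_r}{b_r-1}-1$, and to the straightforward monotone limit as $i\to\infty$.
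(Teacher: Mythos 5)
Your proof is correct and takes essentially the same route as the paper: since $P_i=\frac{1+P_{i-1}}{b_r+b_r/(i+1)}$, your chain $\sum_{j=0}^{i-1}x_j\ge P_i\,x_i$ combined with the intermediate estimate $P_i\ge\frac{i+1}{i+2}\frac{b_r}{b_r-1}-1$ from the inner induction of Lemma~\ref{lemma:helper.bb} is exactly the paper's computation $\sum_{j=0}^{i-1}x_j\ge x_{i-1}(1+P_{i-1})\ge\frac{x_i}{b_r+\frac{b_r}{i+1}}\left(\rho_r-b_r\frac{i+2}{i+1}\right)$, merely factored differently, and your monotone-limit argument for the asymptotic claim matches the paper's as well. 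The only caveat is the index bookkeeping you yourself flag: your identity $\sum_{j=0}^{i-1}x_j\ge P_i x_i$ requires reading the products in $P_i$ as running over $k=0,\dots,j$ (the convention consistent with the recurrence $P_{i+1}=\frac{1+P_i}{b_r+b_r/(i+2)}$ used in the lemma), which is the same reading the paper's own proof relies on.
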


\begin{proof}
We have
\begin{eqnarray*}
\sum_{j=0}^{i-1} x_j &=& x_{i-1} +\sum_{j=0}^{i-2} x_j 
\geq x_{i-1}
(1+\sum_{j=0}^{i-2} \frac{1}{\prod_{k=0}^{j-1} 
(b_r+\frac{b_r}{i+1-k})}) \\
&=& x_{i-1}(1+P_{i-1}) \geq x_{i-1}(\rho_r-b_r\frac{i+2}{i+1})  \\
&\geq& \frac{x_i}{b_r+\frac{b_r}{i+1}}(\frac{b_r^2}{b_r-1}-b_r\frac{i+2}{i+1}),
\end{eqnarray*}
where the first inequality follows from Lemma~\ref{lemma:helper.bb}, the second
inequality holds from the property on $P_i$ that was
shown in the proof of Lemma~\ref{lemma:helper.bb}, and the last inequality follows again from
Lemma~\ref{lemma:helper.bb}. 

We now observe that for sufficiently large $i$, the rhs of the inequality is arbitrarily close to $\frac{1}{b_r-1}x_i$, which concludes the proof.
\end{proof}

We can now show a lower bound on the competitiveness of every strategy that matches the upper bound of
Theorem~\ref{thm:position.upper}.

\begin{theorem}
For every $(c,r)$-competitive strategy for linear search in which the hint is the position of the target, it holds that 
$c \geq \frac{b_r+1}{b_r-1} -\epsilon$, for any $\epsilon>0$.
\end{theorem}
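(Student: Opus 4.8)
The plan is to turn the asymptotic estimate of Corollary~\ref{cor:sum} into a lower bound on consistency by forcing the target to lie so far from the root that, no matter which $r$-robust strategy is run, it can only be located in a \emph{late} iteration, where the prefix-sum constant $\frac{1}{b_r-1}$ is essentially attained.

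First I would fix an arbitrary $(c,r)$-competitive strategy $S$ and record the only structural fact I need: by the definition of robustness in~\eqref{eq:definition}, the adversary may announce any hint $h$ and yet place the target anywhere, so for \emph{every} hint $h$ the induced search strategy $S(h)=(x_i(h))$ is $r$-robust. Consequently each $S(h)$ satisfies both Lemma~\ref{lemma:helper.bb} and Corollary~\ref{cor:sum}. Next I would extract from Lemma~\ref{lemma:helper.bb} a segment-length bound that is uniform over all $r$-robust strategies: iterating $x_i\le(b_r+\frac{b_r}{i+1})x_{i-1}$ from the base case $x_0\le\rho_r=\frac{b_r^2}{b_r-1}$ gives $x_i\le M_i:=\rho_r\prod_{k=1}^{i}(b_r+\frac{b_r}{k+1})$, a quantity that depends only on $i$ and $r$, not on the particular strategy, and which is increasing in $i$. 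Given $\epsilon>0$, I would invoke the second part of Corollary~\ref{cor:sum} to choose an index $i_0$ with $\sum_{j=0}^{i-1}x_j\ge(\frac{1}{b_r-1}-\frac{\epsilon}{2})x_i$ for all $i>i_0$ (this $i_0$ works for every $r$-robust strategy), and then place the target $t$ at distance $d>M_{i_0}$ on branch $0$.

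Now I would analyze the run of an arbitrary $S(h)$ on this $t$. Since $x_i(h)\le M_i\le M_{i_0}<d$ for every $i\le i_0$, the target is reached by no iteration with index at most $i_0$; hence the first iteration $j$ that searches branch $0$ out to distance at least $d$ satisfies $j>i_0$ and $d\le x_j(h)$. The searcher traverses $2\sum_{i=0}^{j-1}x_i(h)+d$ before locating $t$, so
\[
\frac{d(S(h),t,h)}{d(t)}=1+\frac{2\sum_{i=0}^{j-1}x_i(h)}{d}\ge 1+\frac{2\sum_{i=0}^{j-1}x_i(h)}{x_j(h)}\ge 1+2\Bigl(\tfrac{1}{b_r-1}-\tfrac{\epsilon}{2}\Bigr)=\frac{b_r+1}{b_r-1}-\epsilon,
\]
using $d\le x_j(h)$ for the first inequality and $j>i_0$ with Corollary~\ref{cor:sum} for the second. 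Because this estimate holds for every hint $h$ (each $S(h)$ being $r$-robust), taking the infimum over $h$ and then the supremum over $t$ yields $c\ge\frac{b_r+1}{b_r-1}-\epsilon$, which is the claim.

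The main obstacle is the interplay between the two forms of Corollary~\ref{cor:sum}: the clean constant $\frac{1}{b_r-1}$ only emerges in the limit, so a target at moderate distance would give a strictly weaker bound. The crux is therefore to certify that the target is found \emph{only} in a late iteration, and to do so \emph{uniformly} across all $r$-robust strategies; this is precisely where the strategy-independent bound $M_{i_0}$ distilled from Lemma~\ref{lemma:helper.bb} is indispensable. A secondary subtlety, easily dispatched once the right viewpoint is taken, is that consistency involves the infimum over hints: the per-iteration estimate applies verbatim to \emph{every} $S(h)$ rather than only to the truthful one, precisely because robustness constrains all of them simultaneously.
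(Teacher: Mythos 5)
Your proof is correct and takes essentially the same route as the paper's: each $S(h)$ must be $r$-robust, the ratio is bounded below by $1+2\sum_{i<j}x_i/x_j$ using $d(t)\le x_j$, and Corollary~\ref{cor:sum} is invoked for a late locating iteration. The only difference is one of rigor: you make explicit, via the strategy-independent segment bound $M_i$ distilled from Lemma~\ref{lemma:helper.bb}, why a distant target is necessarily found in a late iteration \emph{uniformly over all hints}, a point the paper compresses into the parenthetical remark that $j_t$ is unbounded ``otherwise the strategy would not have bounded robustness.''
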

\begin{proof}
Let $X=(x_i)$ denote an $r$-robust strategy, with a hint that specifies the position of a target $t$.
Suppose that $X$ locates the target at the $j_t$-th iteration. We have that 
\begin{align*}
c= \frac{d(X,t)}{d(t)}=\frac{2\sum_{i=0}^{j_t-1} x_i+d(t)}{d(t)} 
\geq \frac{2\sum_{i=0}^{j_t-1} x_i+x_{j_t}}{x_{j_t}} 
=1+2\frac{\sum_{i=0}^{j_t-1} x_i}{x_{j_t}}.
\end{align*}

Note that the target $t$ can be chosen to be arbitrarily far from $O$, which means that $j_t$ can be unbounded
(otherwise the strategy would not have bounded robustness). From Corollary~\ref{cor:sum} this implies that 
$\sum_{i=0}^{j_t-1} x_i$ can be arbitrarily close to $x_{j_t}\frac{1}{b_r-1}$, and therefore $c$ is
arbitrarily close to $1+2\frac{1}{b_r-1}=\frac{b_r+1}{b_r-1}$,
which concludes the proof.
\end{proof}

\section{Hint is the direction of search}
\label{sec:direction}

In this section we study the setting in which the hint is related to the {\em direction} of the search. More precisely, the hint is a single bit that dictates whether the target is to the left or to the right of the root $O$. Again, we are interested in Pareto-optimal strategies with respect to competitiveness: namely, for any fixed $r\geq 9$, what is the smallest $c$ such that there exist $(c,r)$-competitive strategies?

A related problem was studied by Schuierer~\cite{schuierer:lower}, which is called {\em biased search}. One defines the {\em left} and {\em right} competitive ratios, as the competitive ratio of a search, assuming that the target hides to the left of the root, or to the right of the root, respectively. However, the searcher does not know the target's branch. Of course we know that the maximum of the left and the right competitive ratios is at least 9 (and for the doubling strategy, this is tight). \cite{schuierer:lower} shows that for any search strategy on the line (not necessarily 9-robust), the {\em average} of the left and the right competitive ratios is at least 9.
At first glance, one may think that this could be an unsurprising, and perhaps even trivial result; 
however this is not the case. The proof in~\cite{schuierer:lower} is not straightforward, and relies in a generalization of a theorem of~\cite{gal:general} which lower bounds the supremum of a sequence of functionals by the supremum of much simpler, geometric functionals. We will discuss this theorem shortly.

The problem studied in~\cite{schuierer:lower} is related to our setting: the left and right competitive ratios correspond to the consistency $c$ and the robustness $r$ of the strategy. Hence from~\cite{schuierer:lower} we know that $c+r \geq 18$. However, there is a lot of room for improvement. In this section we will show a much stronger tradeoff between $c$ and $r$, and we will further prove that it is tight. For example, we will show that 
for any $(c,r)$-competitive strategy, if $c$ approaches 5 from above, then $r$ approaches infinity (in contrast, in this case, the lower bound of~\cite{schuierer:lower} yields $r\geq 13$). In fact, we will show that $c+r$ is minimized when $c=r=9$. To this end, we will apply a parameterized analysis based on Schuierer's approach. We begin with the upper bound, by analyzing a specific strategy. 

\begin{theorem}
For every $b\geq 1$, and $\delta \in (0,1]$, there is a $(c,r)$-competitive strategy for linear search with the hint being the direction of search, in which
\[
c=1+2\cdot (\frac{b^2}{b^2-1}+ \delta \frac{b^3}{b^2-1}) \ \textrm{and} \ 
r=1+2\cdot ( \frac{b^2}{b^2-1}+\frac{1}{\delta} \frac{b^3}{b^2-1}).
\]
\label{thm:direction.upper}
\end{theorem}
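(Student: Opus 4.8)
The plan is to construct an explicit strategy whose segment lengths follow a geometric pattern but with two interleaved subsequences: one ``short'' subsequence used when the searcher moves in the hinted direction, and one ``long'' subsequence governing the excursions in the non-hinted direction. Concretely, I would use a strategy of the form $G_b(\gamma_0,\gamma_1)$ with base $a=b^2$, alternating between the two branches, where the multipliers $\gamma_0,\gamma_1$ are chosen so that segments searching the hinted branch are scaled down relative to those searching the opposite branch. Intuitively, if the hint says the target is on (say) branch $0$, then the searcher should probe branch $0$ more aggressively and branch $1$ more cautiously; the parameter $\delta$ controls the relative aggressiveness, with small $\delta$ making the hinted branch very cheap (good consistency) at the expense of an expensive opposite branch (bad robustness).

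The key steps, in order, are as follows. First, I would fix the direction convention: assume the hint indicates branch $0$, and set up the strategy so that odd-indexed iterations (searching branch $0$, the hinted one) have length multiplier $\delta$ relative to the even-indexed iterations (searching branch $1$). Second, I would apply the competitive-ratio formula~\eqref{eq:comp.X}, namely $\comp(X)=1+2\sup_{i\ge 0}\frac{\sum_{j=1}^{i}x_j}{x_{i-1}}$, separately to targets on branch $0$ and targets on branch $1$. For consistency $c$, the relevant worst-case positions are those on the hinted branch $0$, so I evaluate the supremum of the ratio only over the iterations $i$ that search branch $0$; for robustness $r$, the adversary places the target on branch $1$, and I evaluate the supremum over iterations searching branch $1$. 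Third, because the two interleaved subsequences are each geometric with ratio $a=b^2$, each supremum reduces to a single limiting value obtainable by summing a geometric series, yielding closed-form expressions. I expect the sum $\sum_{j} x_j$ split across the two subsequences to produce exactly the terms $\frac{b^2}{b^2-1}$ (the contribution from one subsequence) and $\delta\frac{b^3}{b^2-1}$ or $\frac{1}{\delta}\frac{b^3}{b^2-1}$ (the cross-contribution from the other subsequence, scaled by $\delta$ or $1/\delta$ depending on whether we are measuring against a hinted or non-hinted target), matching the claimed $c$ and $r$.

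The main obstacle, and the step requiring the most care, is verifying that the claimed closed forms are in fact the \emph{suprema} over all iterations $i$, not merely the limiting values as $i\to\infty$. For a clean geometric strategy the ratio $\frac{\sum_{j\le i}x_j}{x_{i-1}}$ is constant or monotone, so the supremum is attained in the limit; but with the interleaving and the asymmetric scaling by $\delta$, I would need to check that no small-index iteration produces a larger ratio, which typically forces a constraint relating the multipliers and the base (this is why the statement requires $\delta\in(0,1]$ and $b\ge 1$). A secondary technical point is handling the boundary/initial segment correctly with the convention $x_{-1}=1$, and ensuring the parity assignment $s_{i+1}\neq s_i$ is consistent with placing the hinted branch on the intended subsequence. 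Once the supremum is confirmed to be the limiting value, the remaining algebra is the routine geometric-series evaluation sketched above, and the two expressions for $c$ and $r$ follow by symmetry of the roles of the two branches under the hint.
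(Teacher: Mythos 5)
The overall route you sketch is the paper's: interleave two geometric subsequences (each with ratio $b^2$) on the two branches, apply~\eqref{eq:comp.X} separately to worst-case targets on the hinted and non-hinted branch, and evaluate geometric sums. But your construction is oriented the wrong way around: you assign the \emph{short} ($\delta$-scaled) segments to the \emph{hinted} branch and the long ones to the non-hinted branch. Check what this does: with $x_{2i}=b^{2i}$ searching the non-hinted branch and $x_{2i+1}=\delta b^{2i+1}$ searching the hinted branch, a (correctly hinted) target just beyond a turn point at distance $\delta b^{2k-1}$ incurs ratio $1+2\frac{\sum_{i=0}^{2k}x_i}{x_{2k-1}} \rightarrow 1+2\bigl(\frac{1}{\delta}\frac{b^3}{b^2-1}+\frac{b^2}{b^2-1}\bigr)$, because the long excursions on the wrong branch are charged against a $\delta$-small target distance. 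So your strategy achieves the theorem's two expressions with $c$ and $r$ \emph{interchanged}: for $\delta<1$ its consistency is worse than its robustness, which is the opposite of the intended tradeoff and does not prove the statement. The fix is exactly the paper's choice: search the hinted branch first with the full lengths $b^{2i}$ (even iterations), and scale down the \emph{non-hinted} excursions, $x_{2i+1}=\delta b^{2i+1}$. Then the wasted exploration of the wrong branch contributes the small term $\delta\frac{b^3}{b^2-1}$ to $c$, while an adversarial target on the non-hinted branch sits just beyond a $\delta$-short segment and produces the $\frac{1}{\delta}\frac{b^3}{b^2-1}$ term in $r$. Note that your own intuition sentence (``probe branch $0$ more aggressively'') points to the correct orientation and contradicts the construction you then write down.

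Your worry about the supremum versus the limit is, in fact, a non-issue and imposes no constraint tying $\delta$ to $b$. After summing, the $k$-th term of the consistency supremum is $\frac{b^{2(k+1)}-1}{(b^2-1)b^{2k}}+\delta\frac{b^{2k+3}-1}{(b^2-1)b^{2k}}$, and each summand is bounded above by its limiting value ($\frac{b^2}{b^2-1}$, resp.\ $\delta\frac{b^3}{b^2-1}$) for \emph{every} $k$, since the $-1$ in the numerators only decreases the ratio; the same termwise bound handles $r$. This is all an upper-bound proof needs, which is how the paper argues. The hypothesis $\delta\in(0,1]$ plays no role in that step; it only ensures $c\le r$, i.e., that the consistency label attaches to the smaller of the two values (the case $\delta>1$ simply swaps the roles of the branches). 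With the orientation corrected, the rest of your outline reproduces the paper's computation.
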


\begin{proof}
Suppose, without loss of generality, that the hint points to branch 0. Consider a strategy 
$X=(\{x_i, i \bmod 2\})$,  which starts with branch 0, and alternates between the two branches. 
This strategy has consistency and robustness given by the following expressions, as a consequence 
of~\eqref{eq:comp.X}:
\begin{equation}
c=
 1+2 \cdot \sup_{k \geq 0} \{ \frac{\sum_{i=0}^{2k+1}x_i}{x_{2k}}\} \quad \textrm{and} \quad
r= 1+2\cdot \sup_{k\geq 0} \{ \frac{\sum_{i=0}^{2k}x_i}{x_{2k-1}} \},
\label{eq:ratios.direction.upper}
\end{equation}
where $x_{-1}$ is defined to be equal to 1. 

In addition, the search lengths of $X$ are defined by
\[
x_i= b^i, \textrm{if $i$ even and } \ x_i=\delta b^i, \textrm{if $i$ is odd},  
\]
where we require that $b>1$, and $\delta \in (0,1]$. Note that $X$ is ``biased'' with respect to branch 0, which makes sense since the hint points to that branch.

Substituting these values into~\eqref{eq:ratios.direction.upper}, we obtain that 
\begin{align*}
c&=1+2\cdot \sup_{k \geq 0} \{\frac{\sum_{i=0}^k b^{2i}}{b^{2k}}+ \delta
\frac{\sum_{i=0}^k b^{2i+1}}{b^{2k}}  \} 
=1+2 \cdot \sup_{k \geq 0} \{ \frac{b^{2(k+1)}-1}{(b^2-1) b^{2k}}+ 
\delta  \frac{b^{2k+3}-1}{(b^2-1)b^{2k}}
\} 
\\
& \leq 1+2\cdot (\frac{b^2}{b^2-1}+ \delta \frac{b^3}{b^2-1}).
\end{align*}
Similarly, we have that 
\begin{align*}
r&=1+2\cdot \sup_{k \geq 0} \{\frac{1}{\delta}\frac{\sum_{i=0}^k b^{2i}}{b^{2k-1}}+
\frac{\sum_{i=0}^{k-1} b^{2i+1}}{b^{2k-1}}  \} 
=1+2 \cdot \sup_{k \geq 0} \{ 
\frac{1}{\delta}
\frac{b^{2(k+1)}-1}{(b^2-1) b^{2k-1}}+ 
\frac{b^{2k+1}-1}{(b^2-1)b^{2k-1}}
\} 
\\
& \leq 1+2\cdot ( \frac{1}{\delta} \frac{b^3}{b^2-1}+ \frac{b^2}{b^2-1}).
\end{align*}
\end{proof}

For example, if $\delta=1$, and $b=2$, then Theorem~\ref{thm:direction.upper} shows that there exists a $(9,9)$-competitive strategy. Interestingly, the theorem shows that as the consistency $c$ approaches 5 from above, the robustness $r$ of the strategy must approach infinity. This is because 
the function $\frac{b^2}{b-1}$ is minimized for $b=2$, and hence for $c$ to approach 5 from above, it must be that $b$ approaches 2, and $\delta$ approaches 0. But then $\frac{1}{\delta}$ must approach infinity, and so must $r$.

We will show that the strategy of Theorem~\ref{thm:direction.upper} is Pareto-optimal. To this end, we will use the following theorem of~\cite{schuierer:lower}. Recall the definitions of $\alpha_X$, $X^{+i}$
and $G_a(\gamma_0, \ldots \gamma_{n-1})$ given in Section~\ref{sec:prelim}.

\begin{theorem}[Theorem 1 in~\cite{schuierer:lower}]
Let $p,q$ be two positive integers, and $X=(x_0,x_1,\ldots)$ a sequence of positive numbers with 
$\sup_{n \geq 0} x_{n+1}/x_n <\infty$ and $\alpha_X>0$. Suppose that $F_k$ is a sequence of functionals that 
satisfy the following properties:
\begin{itemize}
\item[(1)]$ F_k(X)$ depends only on $x_0,x_1, \ldots x_{pk+q}$,
\item[(2)] $F_k(X)$ is continuous in every variable, for all positive sequences $X$,
\item[(3)] $F_k(a X)=F_k(X)$, for all $a>0$,
\item[(4)] $F_k(X+Y) \leq \max(F_k(X), F_k(Y))$, for all positive sequences $X,Y$, and
\item[(5)] $F_{k+i}(X) \geq F_k(X^{+ip})$, for all $i \geq 1$.  
\end{itemize}
Then there exist $p$ positive numbers $\gamma_0, \gamma_1, \gamma_{p-1}$ such that 
\[
\sup_{0 \leq k <\infty} F_k(X) \geq \sup_{0 \leq k<\infty} F_k (G_{\alpha_X}(\gamma_0, \ldots ,\gamma_{p-1})). 
\]
\label{thm:sch}
\end{theorem}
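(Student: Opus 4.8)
The plan is to follow the functional-reduction philosophy of Gal: exploit the two purely algebraic hypotheses, scale-invariance~(3) and the max-subadditivity~(4), to show that positive superpositions of sequences cannot increase any $F_k$, and then to construct the desired geometric-periodic sequence $G$ as an explicit limit of such superpositions built from the period-$p$ shifts $X^{+pm}$ of $X$. The engine is an elementary \emph{averaging lemma}: iterating (4) over a finite sum and absorbing the positive coefficients via (3) gives, for any positive sequences $X^{(\ell)}$ and weights $c_\ell>0$,
\[
F_k\left( \sum_{\ell} c_\ell X^{(\ell)} \right) \le \max_\ell F_k\left(X^{(\ell)}\right).
\]
Since the conclusion is a one-sided inequality in which we are free to choose the $\gamma_j$, it suffices to exhibit \emph{one} geometric-periodic $G$ with $\sup_k F_k(G)\le \sup_k F_k(X)$.

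Next I would introduce the generating (resolvent) sequence
\[
Y^\lambda := \sum_{m=0}^{\infty} \lambda^m\, X^{+pm}, \qquad 0<\lambda<\alpha_X^{-p},
\]
which converges coordinatewise in the stated range of $\lambda$ (by the root test, using that $\alpha_X=\limsup_n x_n^{1/n}$ controls the growth of $x_{i+pm}$ in $m$). Truncating $Y^\lambda$, applying the averaging lemma to the partial sums, passing to the limit through the continuity~(2) and finite-dependence~(1) of $F_k$, and finally invoking the shift property~(5) in the form $F_k(X^{+pm})\le F_{k+m}(X)$, yields
\[
F_k(Y^\lambda) \le \sup_{m\ge 0} F_k\left(X^{+pm}\right) \le \sup_{k'\ge 0} F_{k'}(X)
\]
for every $k$ and every admissible $\lambda$. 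Thus $\sup_k F_k(Y^\lambda)\le \sup_k F_k(X)$, uniformly in $\lambda$.

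The third step extracts $G$ from $Y^\lambda$ as $\lambda$ tends to the critical value $\alpha_X^{-p}$. A one-line computation gives the self-similar recursion
\[
Y^\lambda_{i+p} = \frac{1}{\lambda}\left( Y^\lambda_i - x_i \right).
\]
Normalizing by the coordinate in an extremal residue class $\sigma$ (one attaining $\limsup_m x_{\sigma+pm}^{1/(\sigma+pm)}=\alpha_X$) and setting $\hat Y^\lambda := Y^\lambda/Y^\lambda_\sigma$, I would let $\lambda\uparrow\alpha_X^{-p}$ along a subsequence; using $\sup_n x_{n+1}/x_n<\infty$ for coordinatewise compactness, the $\hat Y^\lambda$ converge to some $(g_i)$ with $g_\sigma=1$. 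If the correction term $x_i/Y^\lambda_\sigma$ vanishes in the limit, the recursion collapses to $g_{i+p}=\alpha_X^{p}g_i$, so $(g_i)=G_{\alpha_X}(\gamma_0,\ldots,\gamma_{p-1})$ with $\gamma_j=g_j\alpha_X^{-j}$; scale-invariance~(3) gives $F_k(\hat Y^\lambda)=F_k(Y^\lambda)$, and continuity~(2) with~(1) lets me pass $F_k$ through the limit to obtain $\sup_k F_k(G)\le\sup_k F_k(X)$, as required.

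The hard part will be precisely this passage to the critical $\lambda$. The correction term $x_i/Y^\lambda_\sigma$ vanishes only if the critical resolvent $\sum_m \alpha_X^{-pm}x_{\sigma+pm}$ diverges, which is where the hypotheses $\alpha_X>0$ and the $\limsup$-definition of $\alpha_X$ must enter with care: a sequence whose extremal-residue terms carry subexponential damping (e.g.\ $x_{\sigma+pm}\sim\alpha_X^{\sigma+pm}/m^2$) gives a convergent critical series and a non-geometric limit. I would resolve this either by a preliminary regularization of $X$ --- replacing it by a sequence with $\lim x_n^{1/n}$ genuinely existing along $\sigma$, without increasing $\sup_k F_k$ --- or by a compactness/Tauberian argument on the normalized profiles that forces the limiting $\gamma_j$ to be the correct geometric coefficients. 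A secondary nuisance is ruling out $\gamma_j=0$ for non-extremal residues, which can be absorbed by perturbing such coordinates to small positive values and appealing once more to continuity~(2).
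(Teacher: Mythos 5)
Your attempt targets a statement that this paper does not actually prove: Theorem~\ref{thm:sch} is quoted verbatim from~\cite{schuierer:lower} and is used here as a black box, so the only meaningful comparison is with the Gal--Schuierer argument in the cited literature. Against that benchmark, your architecture is the classical one rather than a new route: the averaging lemma obtained by iterating property (4) and absorbing positive coefficients via (3), the resolvent $Y^\lambda=\sum_{m\ge 0}\lambda^m X^{+pm}$ for $0<\lambda<\alpha_X^{-p}$, the passage of $F_k$ through the infinite sum via finite dependence (1) and continuity (2), and the folding of shifts back into $\sup_k F_k(X)$ via (5) are all correct, and the resulting inequality $\sup_k F_k(Y^\lambda)\le\sup_k F_k(X)$ is sound as you derive it.

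The genuine gap is the one you flag but do not close: the critical passage $\lambda\uparrow\alpha_X^{-p}$, which is the mathematical heart of the theorem. As your own example shows ($x_{\sigma+pm}\sim\alpha_X^{pm}/m^2$, or worse $\alpha_X^{pm}e^{-\sqrt m}$), the critical resolvent may converge, in which case $Y^\lambda_\sigma$ stays bounded, the correction term $x_i/Y^\lambda_\sigma$ does not vanish, and the limiting recursion $g_{i+p}=\alpha_X^p(g_i-c_i)$ with $c_i>0$ is not geometric; moreover, iterating the construction on $Y^{\lambda_c}$ only improves the damping by polynomial factors, so subexponential damping such as $e^{-\sqrt m}$ survives every finite iteration. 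Of your two suggested repairs, the ``preliminary regularization without increasing $\sup_k F_k$'' has no supporting mechanism under axioms (1)--(5): nothing in the hypotheses makes $F_k$ monotone under coordinatewise domination, so you cannot alter coordinates of $X$ and certify that the supremum does not grow --- the only tool for controlling $F_k$ of a derived sequence is (4), i.e.\ precisely the superpositions you have already exhausted. The compactness/Tauberian route is the one that can be made to work, but it requires an actual argument you have not supplied: for instance, in the convergent case one can set $W=Y^{\lambda_c}$ (so that $\sup_k F_k(W)\le\sup_k F_k(X)$, and $W_{\sigma+pm}=\alpha_X^{pm}R_m$ where the $R_m$ are the decreasing tails of the critical series, which satisfy $\limsup_m R_m^{1/m}=1$ and hence admit index subsequences along which $R_{m+l}/R_m\to 1$ for each fixed $l$), and then extract the geometric limit from the normalized shifts $W^{+pm}/W_{\sigma+pm}$, invoking (5) once more; even then, handling the residue classes other than $\sigma$ and proving that every $\gamma_j$ is finite and strictly positive (the hypothesis $\sup_n x_{n+1}/x_n<\infty$ bounds ratios only from above, not below) is where the real work lies. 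Your closing remark that zero coefficients ``can be absorbed by perturbing\ldots and appealing to continuity'' is also backwards: perturbing $G$ changes the right-hand side of the inequality you must establish, and pointwise continuity of each $F_k$ gives no uniform control of $\sup_k$ over infinitely many $k$ under such a perturbation. As written, the proposal is the standard first two-thirds of the known proof together with an accurate diagnosis of the hard final third, not a proof of the theorem.
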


We will use Theorem~\ref{thm:sch} to prove a tight lower bound on the competitiveness of any strategy $X$.

\begin{theorem}
For every $(c,r)$-competitive strategy, there exists $\alpha>1$, and $\delta \in (0,1]$ such that 
$c=1+2\cdot (\frac{\alpha^2}{\alpha^2-1}+ \delta \frac{\alpha^3}{\alpha^2-1})$, and 
$r=1+2\cdot ( \frac{\alpha^2}{\alpha^2-1}+\frac{1}{\delta} \frac{\alpha^3}{\alpha^2-1})$.
\label{thm:direction.lower}
\end{theorem}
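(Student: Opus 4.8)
The plan is to apply Theorem~\ref{thm:sch} in a \emph{parameterized} fashion, through a single family of functionals that encodes both consistency and robustness at once. By the standing reduction to alternating strategies (Section~\ref{sec:prelim}) it suffices to lower-bound an arbitrary strategy $X=(x_i)$ that alternates branches, with even-indexed segments on the hinted branch; for such a strategy \eqref{eq:ratios.direction.upper} gives, writing $a_X=(c-1)/2$ and $b_X=(r-1)/2$, that $a_X=\sup_k \frac{\sum_{i=0}^{2k+1}x_i}{x_{2k}}$ and $b_X=\sup_k \frac{\sum_{i=0}^{2k}x_i}{x_{2k-1}}$. For a weight $\mu>0$ I would introduce
\[
F^{(\mu)}_k(X)=\max\Big\{\frac{\sum_{i=0}^{2k+1}x_i}{x_{2k}},\ \mu\,\frac{\sum_{i=0}^{2k}x_i}{x_{2k-1}}\Big\},
\]
and verify properties (1)--(5) of Theorem~\ref{thm:sch} with $p=2$, $q=1$. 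Finite dependence and continuity are immediate; scale-invariance holds once the degenerate term $k=0$ involving the convention $x_{-1}=1$ is discarded (dropping it only lowers the relevant suprema, so it is harmless for a lower bound); property~(4) follows from the mediant inequality applied to each ratio, which a weighted maximum of such ratios inherits; and property~(5) holds because passing from $X$ to $X^{+2i}$ only deletes nonnegative leading terms from each numerator.

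Applying Theorem~\ref{thm:sch} yields, for each $\mu$, numbers $\gamma_0,\gamma_1$ with $\sup_k F^{(\mu)}_k(X)\ge \sup_k F^{(\mu)}_k(G)$ for $G=G_{\alpha_X}(\gamma_0,\gamma_1)$, where $\alpha_X>1$ (finite robustness forces geometric growth, matching the required range $\alpha>1$). A direct computation, essentially the geometric sums already performed in the proof of Theorem~\ref{thm:direction.upper}, shows that, writing $\alpha=\alpha_X$, $A=\frac{\alpha^2}{\alpha^2-1}$, $B=\frac{\alpha^3}{\alpha^2-1}$ and $\delta=\gamma_1/\gamma_0$, the strategy $G$ has $a_G=A+\delta B$ and $b_G=A+B/\delta$; thus its consistency and robustness are exactly the expressions claimed in the theorem and, crucially, lie on the hyperbola $(a_G-A)(b_G-A)=B^2$. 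Since the maximum commutes with the supremum, $\max\{a_X,\mu b_X\}\ge\sup_k F^{(\mu)}_k(X)\ge\sup_k F^{(\mu)}_k(G)=\max\{a_G,\mu b_G\}\ge \Phi(\mu)$, where $\Phi(\mu):=\inf_{\delta'>0}\max\{A+\delta'B,\ \mu(A+B/\delta')\}$ is the least weighted value over geometric strategies of base $\alpha$. This produces a whole family of weighted-maximum inequalities $\max\{a_X,\mu b_X\}\ge \Phi(\mu)$, valid for all $\mu>0$.

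The final step converts this family into the tradeoff itself, namely $a_X,b_X\ge A$ together with $(a_X-A)(b_X-A)\ge B^2$. Letting $\mu\to0$ and $\mu\to\infty$ forces $a_X\ge A$ and $b_X\ge A$. For the product bound I would argue by contradiction: if $(a_X,b_X)$ lay strictly below the hyperbola it would be strictly dominated by some frontier point $(A+\delta_0B,\ A+B/\delta_0)$, and the balancing weight $\mu_0=(A+\delta_0B)/(A+B/\delta_0)$, for which $\Phi(\mu_0)=A+\delta_0B$ is attained precisely at $\delta_0$, would give $\max\{a_X,\mu_0 b_X\}<A+\delta_0B=\Phi(\mu_0)$, contradicting the derived inequality. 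Hence $(a_X,b_X)$ lies on or above the hyperbola, and since consistency never exceeds robustness, $a_X\le b_X$, one checks $b_X\ge A+B$, so that $\delta:=B/(b_X-A)\in(0,1]$ gives a frontier strategy whose robustness equals $r$ and whose consistency is $1+2(A+\delta B)\le c$; this exhibits the required $\alpha=\alpha_X$ and $\delta$. I expect the main obstacle to be exactly this last conversion, together with confirming that the combined functional genuinely meets all hypotheses of Theorem~\ref{thm:sch} --- in particular that the mediant-based property~(4) survives taking a weighted maximum, and that the $x_{-1}=1$ term can be excised without weakening the bound.
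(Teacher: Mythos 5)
Your proposal is correct, but it deploys Theorem~\ref{thm:sch} in a genuinely different way than the paper does. The paper applies Schuierer's theorem \emph{twice}, separately to the functionals $C_k$ and $R_k$, and then couples the two resulting bounds by insisting that the \emph{same} pair $(\gamma_0,\gamma_1)$ (hence the same $\delta=\gamma_1/\gamma_0$) appears in both --- a step that is not justified by the black-box statement of the theorem, and which the paper must support with a separate remark that, in Schuierer's proof, $\gamma_0,\gamma_1$ depend only on the sequence $X$ and not on the functional being bounded. Your single weighted-max family $F^{(\mu)}_k=\max\{C_k,\mu R_k\}$ sidesteps exactly this delicacy: each application (one per weight $\mu$) uses the theorem purely as a black box, and the coupling between consistency and robustness is instead recovered at the end by the balancing/separation argument against the hyperbola $(a-A)(b-A)=B^2$. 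What you pay for this is (i) re-verifying hypotheses (1)--(5) for the combined functional --- your checks are right: the mediant inequality passes to a weighted maximum for property (4), the max preserves the shift inequality for property (5), and excising the non-homogeneous $x_{-1}=1$ term by re-indexing only weakens the supremum, so it is harmless for a lower bound --- and (ii) the extra convex-duality step, where your phrase ``strictly dominated by some frontier point'' is doing real work: taking the frontier point with matching second coordinate would give only $\mu_0 b_X=\Phi(\mu_0)$ and no contradiction, so one must, as you do, pick a frontier point strictly dominating in \emph{both} coordinates, which the strict inequality $(a_X-A)(b_X-A)<B^2$ permits. Two further remarks: your final conversion (choosing $\delta=B/(b_X-A)\in(0,1]$ via $a_X\le b_X$) in effect reads the theorem's equalities as the Pareto-frontier inequalities $c\ge 1+2(A+\delta B)$, $r\ge 1+2(A+B/\delta)$, which matches what the paper's own proof actually establishes and how the theorem is subsequently used; and your ``WLOG even-indexed segments on the hinted branch'' is slightly stronger than the preliminaries' reduction --- the paper treats the case where the first segment is on the non-hinted branch by swapping $C_k$ and $R_k$ --- but since your hyperbola bound is symmetric under exchanging $a_X$ and $b_X$, the swapped case goes through your argument verbatim, so this is only a presentational omission.
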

\begin{proof}
Let $X=(x_0,x_1,\ldots)$ denote a $(c,r)$-competitive strategy, and suppose, without loss of generality, that the hint specifies that the target is in the branch labeled 0. There are two cases concerning $X$: either the first exploration is on the branch labeled 0, or on the branch labeled 1. Let us assume the first case; at the end, we will argue that the second case follows from a symmetrical argument. As we argued in the proof of Theorem~\ref{thm:direction.upper}, in this case the competitiveness of $X$ is described by~\eqref{eq:ratios.direction.upper}. Let us define the functionals
\[
C_k =\frac{\sum_{i=0}^{2k+1}x_i}{x_{2k}} \quad \textrm{and} \ R_k=\frac{\sum_{i=0}^{2k}x_i}{x_{2k-1}}.
\]
Then we have that
\begin{equation}
c=1+2 \cdot \sup_{k \geq 0} C_k \quad \textrm{and} \ r=1+2\cdot \sup_{k\geq 0} R_k.
\label{eq:comp.direction.lower}
\end{equation}
The functional $C_k$ satisfies the conditions of Theorem~\ref{thm:sch} with $p=2$, 
as shown in~\cite{schuierer:lower} therefore there exist
$\gamma_0,\gamma_1 >0$ such that 
\begin{eqnarray*}
\sup_{k \geq 0} C_k &\geq& \sup_{k \geq 0} C_k(G_{\alpha_X}(\gamma_0,\gamma_1))
=\sup_{k\geq 0} \frac{\gamma_0+\gamma_1 \alpha_X+\gamma_0 \alpha_X^2+\ldots +\gamma_0 \alpha_X^2k 
+\gamma_1 \alpha_X^{2k+1}}{\gamma_0 \alpha_X^{2k}} \\
&=& \sup_{k \geq 0} \{ \frac{\sum_{i=0}^k \alpha_X^{2i}}{\alpha_X^{2k}}+
\frac{\gamma_1}{\gamma_0} \frac{\sum_{i=0}^k \alpha_X^{2i+1}} {\alpha_X^{2k}} \}.
\end{eqnarray*}
If $\alpha_X \leq 1$, then the above implies that $\sup_{k\geq 0} C_k=\infty$ (another way of dismissing this case is that if $\alpha_X \leq 1$, then $X$ is bounded and the two branches are not explored to infinity, as required by any strategy of bounded consistency). We can thus assume that $\alpha_X>1$, and we obtain that
\begin{equation}
\sup_{k \geq 0} C_k \geq \sup_{k \geq 0} \{ \frac{\alpha_X^{2k+2}-1}{(\alpha_X^2-1)\alpha_X^{2k}}+
\frac{\gamma_1}{\gamma_0} \frac{\alpha_x^{2k+3}-1}{(\alpha_X^2-1)\alpha_X^{2k}} \}=
\frac{\alpha_X^2}{\alpha_X^2-1} + \frac{\gamma_1}{\gamma_0} \frac{\alpha_X^3}{\alpha_X^2-1}.
\label{eq:R_k}
\end{equation}
We can lower-bound $r$ using a similar argument. The functional $R_k$ satisfies 
the conditions of Theorem~\ref{thm:sch}, again as shown in~\cite{schuierer:lower} therefore  
\begin{eqnarray*}
\sup_{k \geq 0} R_k &\geq& \sup_{k \geq 0} R_k(G_{\alpha_X}(\gamma_0,\gamma_1))
=\sup_{k\geq 0} \frac{\gamma_0+\gamma_1 \alpha_X+\gamma_0 \alpha_X^2+\ldots +\gamma_0 \alpha_X^{2k}}
{\gamma_1 \alpha_X^{2k-1}} \\
&=& \sup_{k \geq 0} \{ \frac{\gamma_0}{\gamma_1} \frac{\sum_{i=0}^k \alpha_X^{2i}}{\alpha_X^{2k-1}}+
 \frac{\sum_{i=0}^k \alpha_X^{2i-1}} {\alpha_X^{2k-1}}\}.
\end{eqnarray*}
Using the same argument as earlier, it suffices to consider only the case $\alpha_X>1$, in which case
we further obtain that
\begin{equation}
\sup_{k \geq 0} R_k \geq \sup_{k \geq 0} \{ \frac{\gamma_0}{\gamma_1} \frac{\alpha_X^{2k+2}-1}{(\alpha_X^2-1)\alpha_X^{2k-1}}+
\frac{\alpha_x^{2k+1}-1}{(\alpha_X^2-1)\alpha_X^{2k-1}} \}=
\frac{\gamma_0}{\gamma_1} \frac{\alpha_X^3}{\alpha_X^2-1} +  \frac{\alpha_X^2}{\alpha_X^2-1}.
\label{eq:W_k}
\end{equation}
Let us define $\delta=\frac{\gamma_1}{\gamma_0}>0$. The result follows then by combining~\eqref{eq:comp.direction.lower},~\eqref{eq:R_k} and~\eqref{eq:W_k}. Note that if we require that $c\leq r$, it must be that $\delta\leq 1$, 
since $\alpha_X>1$.

It remains to consider the symmetric case, in which in $X$, the first explored branch is branch 1. 
In this case the analysis is essentially identical: in~\eqref{eq:comp.direction.lower} we substitute $C_k$ with
$R_k$ and vice versa, in the expressions of $c$ and $r$, and in the resulting lower bounds we require that 
$\delta>1$.
\end{proof}

It is important to note that in the proof of Theorem~\ref{thm:direction.lower} we used the fact that the values
$\gamma_0$ and $\gamma_1$ depend only on $X$ and not on any functionals defined over $X$, as follows from the proof of Theorem~\ref{thm:sch} in~\cite{schuierer:lower}.

Theorem~\ref{thm:direction.lower} implies that any $(c,r)$-competitive strategy $X$ is such that 
\[
c+r \geq 2+4\frac{\alpha_X^2}{\alpha_X^2-1} +2(\delta+\frac{1}{\delta}) \frac{\alpha_X^3}{\alpha_X^2-1},
\]
which is minimized at $\delta=1$, for which we obtain that $c+r$ is minimized only if
$c=r= 1+2\frac{\alpha_X^2}{\alpha_X-1} \geq 9$. We conclude that the average of a strategy's consistency and robustness (or the average of the left and right competitive ratio, in the terminology of~\cite{schuierer:lower}) is minimized only by strategies that are 9-robust.

\section{Hint is a $k$-bit string}
\label{sec:string}

In this section we study the setting in which the searcher has access to a {\em hint string} of $k$ bits. We first consider the case $k=1$. In Section~\ref{subsec:k} we will study the more general case.

It should be clear that even a single-bit hint is quite powerful, and that the setting is non-trivial. For example, the bit can indicate the right direction for search, as discussed in Section~\ref{sec:direction}, but it allows for other possibilities, such as whether the target is at distance at most $D$ from the root, for some chosen $D$. The latter was studied in~\cite{jaillet:online}, assuming that the hint is correct. More generally, the hint can induce a partition of the infinite line into two subsets $L_1$ and $L_2$, such that the hint dictates whether the target is hiding on $L_1$ or $L_2$. 

We begin with the upper bound, namely we describe a specific search strategy, and the corresponding hint bit (as well as the query which it responds).  
Consider two strategies of the form
\[
X_1=(b_r^i) \ \textrm {and} \ X_2=(b_r^{i+\frac{1}{2}}).
\]
Note that both $X_1$ and $X_2$ are $r$-robust: $X_1$ is geometric with base $b_r$, whereas $X_2$ is obtained from $X_1$ by scaling the search lengths by a factor equal to $b_r^{1/2}$.
We also require that the two strategies start by searching the same branch, hence in every iteration, they likewise search the same branch. 

We can now define a strategy $Z$ with a single bit hint, which indicates whether the searcher should choose strategy $X_1$ or strategy $X_2$. For any given target, one of the two strategies will outperform the other, assuming the hint is trusted. Thus, an equivalent interpretation of the hint is in the form of a partition of the infinite line into two sets $L_1$ and $L_2$, such that if the target is in $L_i$, then $X_i$ is the preferred strategy, with $i \in [1,2]$. See Figure~\ref{fig:1bit} for an illustration.

\begin{figure}
   \centerline{\includegraphics[height=5cm, width=15cm]{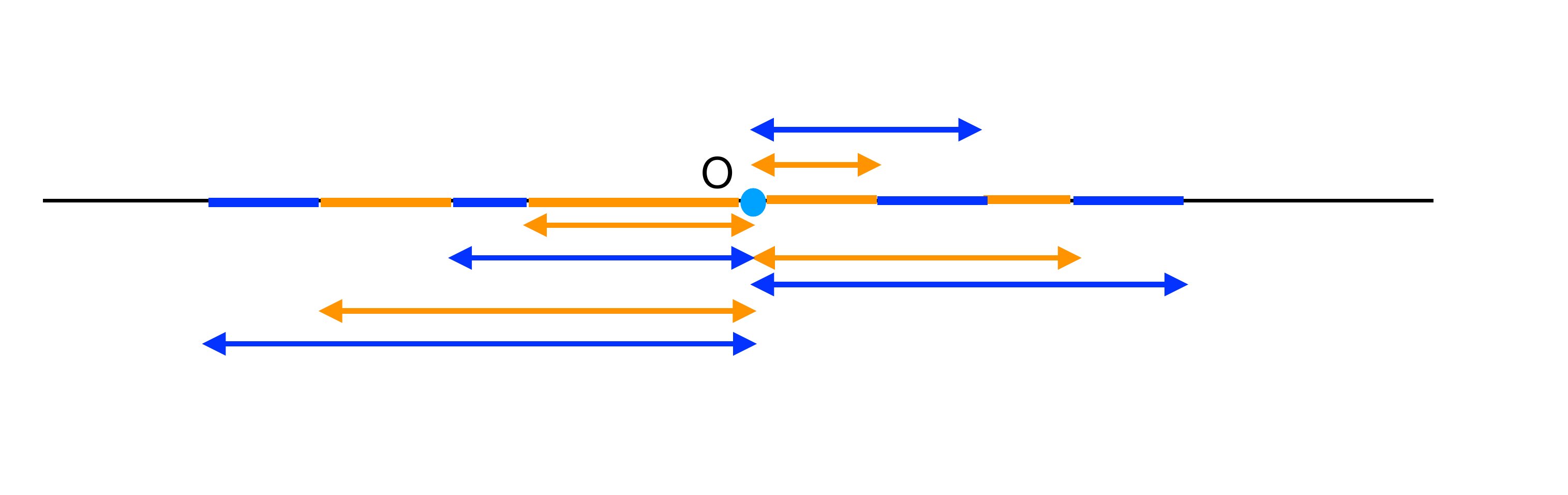}}
  \caption{Illustration of strategy $Z$, using the first four segments of strategies $X_1$ and $X_2$. Blue (dark)and orange (faded) segments correspond to the search segments of strategies $X_1$ and $X_2$, respectively. The parts of the line in blue (resp. orange) indicate the hiding intervals for the target such that $X_1$ (resp. $X_2$) is preferred, and thus chosen by the hint.}
  \label{fig:1bit}
\end{figure}

The following result bounds the performance of this strategy, and its proof will follow as a corollary of a more general theorem concerning $k$-bit strings that we show in Section~\ref{subsec:k} (Theorem~\ref{prop:k.upper}).
\begin{proposition}
For given $r \geq 9$, the above-defined strategy $Z$ is $r$-robust and 
has consistency at most $1+2\frac{a^{3/2}}{a-1}$, where
\[
      a = \left\{
      \begin{array}{ll}
        b_r
        & \mbox{if } r \leq 10
        \\[1em]
        3
        &
        \mbox{if }  r \geq 10.
      \end{array}
      \right.
  \]
\label{prop.single.upper}  
\end{proposition}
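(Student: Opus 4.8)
The plan is to prove the two assertions separately and, crucially, to build $X_1$ and $X_2$ from a free common base $a$ rather than committing to $a=b_r$; the case split in the statement will then drop out of a one-variable optimization at the very end. Concretely, I would take $X_1=(a^i)$ and $X_2=(a^{i+1/2})$, both starting on the same branch, so that $X_2=a^{1/2}X_1$ is merely $X_1$ rescaled. Since $\comp(G_a)=1+2\frac{a^2}{a-1}$ is increasing in $a$ for $a\ge 2$ and equals $r$ at $a=b_r$, every base $a\in[2,b_r]$ renders both $X_1$ and $X_2$ geometric with competitive ratio at most $r$, hence $r$-robust, and this is the only constraint the construction places on $a$.

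The $r$-robustness of $Z$ is then the easy half. On any input the searcher runs exactly one of $X_1,X_2$, so for a fixed target $t$ an adversarial hint can do no worse than select the costlier of the two. As each $X_i$ is $r$-robust, $\mathrm{ratio}_{X_i}(t)\le r$ for every $t$ and every $i$, whence $\sup_t\max_i \mathrm{ratio}_{X_i}(t)\le\max_i\sup_t \mathrm{ratio}_{X_i}(t)\le r$. Thus $Z$ is $r$-robust for any admissible $a$.

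The consistency is the heart of the argument. With the hint trusted we may pick, for each target, whichever of $X_1,X_2$ is cheaper, so the consistency equals $\sup_t\min_i \mathrm{ratio}_{X_i}(t)$. By the symmetry of the two branches it suffices to analyze a target on branch $0$ at distance $d=a^u$. If $n_1$ is the first even iteration with $a^{n_1}\ge d$ and $n_2$ the first even iteration with $a^{n_2+1/2}\ge d$, the two costs are $1+\frac{2(a^{n_1}-1)}{(a-1)a^u}$ and $1+\frac{2a^{1/2}(a^{n_2}-1)}{(a-1)a^u}$, each bounded by $1+\frac{2}{a-1}a^{e_i}$ with the excess exponents $e_1=n_1-u$ and $e_2=n_2+\tfrac12-u$. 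The decisive step is that interleaving the turn points of the two strategies forces $\min(e_1,e_2)\le\tfrac32$ for all $u$: writing $f(x)=2\lceil x/2\rceil-x$ for the period-$2$ sawtooth, one has $e_1=f(u)$ and $e_2=f(u-\tfrac12)$, and a short case check on one period (distinguishing $u\bmod 2\in(0,\tfrac12]$ from $(\tfrac12,2]$) shows the minimum stays below $\tfrac32$ and tends to it only as $d\to\infty$, exactly where the discarded $-1$ terms are negligible. This gives consistency at most $1+2\frac{a^{3/2}}{a-1}$ for every admissible base.

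Finally I would optimize the base. The function $a\mapsto\frac{a^{3/2}}{a-1}$ has derivative proportional to $\tfrac12 a-\tfrac32$, hence a unique minimizer at $a=3$, while robustness forces $a\le b_r$. Since $\comp(G_3)=10$, we have $b_r\ge 3$ exactly when $r\ge 10$; so for $r\ge 10$ one selects $a=3$, and for $r\le 10$ the bound $a\le b_r$ is binding and the best admissible choice is $a=b_r$. This reproduces the stated value of $a$ and the corresponding consistency bound. I expect the only real obstacle to be the consistency analysis, namely pinning down $\min(e_1,e_2)\le\tfrac32$ uniformly and checking that the worst case lives in the large-$d$ regime so that replacing $a^{n}-1$ by $a^{n}$ is free; the robustness and the base optimization are routine by comparison.
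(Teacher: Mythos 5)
Your proof is correct and follows essentially the same route as the paper, which obtains this proposition as the $k=1$ case of Theorem~\ref{prop:k.upper}: a free base $a$, consistency bounded by the worst multiplicative gap $a^{3/2}$ between the interleaved turn points of $X_1$ and $X_2$ on a fixed branch (the exponent $3/2$ arising from the branch-parity constraint, which your sawtooth computation makes explicit where the paper simply asserts the inequality $a^{i_t+j_t/2^k-1-1/2^k}<d(t)$), followed by minimizing $a^{3/2}/(a-1)$ subject to the robustness constraint $a^2/(a-1)\le\rho_r$, yielding $a=3$ when unconstrained and $a=b_r$ when $b_r<3$, i.e., $r\le 10$. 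You also correctly read past the statement's literal wording (which fixes the base of $X_1,X_2$ to $b_r$) to the free-base construction that the paper's actual proof uses.
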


Note that if $r=9$, then $Z$ has consistency $1+4\sqrt{2} \approx 6.657$. For $r \in [9,10]$, the consistency of $Z$ is decreasing in $r$, as one expects. For $r \geq 10$, the consistency is $1+3\sqrt{3} \approx 6.196$.

We now turn our attention to lower bounds. To this end, we observe that a single-bit hint $h$ has only the power to differentiate between two {\em fixed} strategies, say  $X=(x_i)$, and $Y=(y_i)$, i.e., two strategies that are not defined as functions of $h$. We say that $Z$ is {\em determined} by strategies $X$ and $Y$, and the bit $h$. 

\paragraph{Setting up the lower-bound proofs} 
We give some definitions and notation that will be used in the proofs of Theorems~\ref{thm:bit.9.upper} and~\ref{thm:asymptotic}. Let $Z$ be determined by strategies $X$ and $Y$, and a single-bit hint $h$. Let $C$ denote the lower bound on the consistency of $Z$ that we wish to show. 
For given $i$, define $T_X^i=2\sum_{j=0}^{i}x_j+x_{i-1}$, and similarly for $T_Y^i$. Define also $q=r/C$.

Note that a searcher that follows strategy $X$ will turn towards the root at iteration $i-1$, after having explored some branch $\beta_i \in \{0,1\}$ up to distance $x_{i-1}$. Thus, $X$ barely misses a target that may be hiding at branch $\beta_i$, and at distance $x_{i-1}+\epsilon$ from $O$, with $\epsilon>0$ infinitesimally small, and thus requires time $T_X^i$ to discover it. We will denote this hiding position of a potential target by $P_i$. If, on the other hand, the searcher 
follows $Y$, then it can locate a target at position $P_i$ at a time that may be smaller than $T_X^i$; 
let $\tau_i$ denote this time. When strategy $Y$ locates a target hiding at $P_i$, it does so by exploring branch $\beta_i$ to a length greater than $x_{i-1}$. Let $j_i$ be the iteration  at which $Y$ locates $P_i$, thus $y_{j_i} \geq x_{i-1}$. Last, let $Q_i$ denote the position in branch $\beta_i$ and at distance $y_{j_i}+\epsilon$ from $O$. In words, if a target hides at $Q_i$, then strategy $Y$ barely misses it when executing the search segment $y_{j_i}$.

We first show a lower bound on the consistency of 9-robust strategies. In the proof we will not replace all parameters with the corresponding values (e.g., we will sometimes use $r$ to refer to robustness, instead of the value 9). We do so because the arguments in the proof can be applied to other settings, as will become clear in the proof of Theorem~\ref{thm:asymptotic}.

\begin{theorem}
For any $(c,9)$-competitive strategy with single-bit hint, it holds that $c\geq 5$.
\label{thm:bit.9.upper}
\end{theorem}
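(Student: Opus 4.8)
The plan is to argue by contradiction: assume that $Z$, determined by the two fixed strategies $X=(x_i)$ and $Y=(y_i)$ and the bit $h$, is $(c,9)$-competitive with $c<C=5$, and derive a violation of $9$-robustness. First I would record the two structural facts that make the single-bit setting tractable. Since in the robustness regime the adversary controls $h$, and $h$ merely selects one of the two \emph{fixed} strategies, $9$-robustness of $Z$ forces \emph{both} $X$ and $Y$ to be $9$-robust; hence Lemma~\ref{lemma:helper.bb} and Corollary~\ref{cor:sum} apply to each of them, and since $\rho_9=4$ forces $b_9=2$ (so $\tfrac{1}{b_9-1}=1$) the corollary reads $\sum_{j=0}^{i-1}x_j\ge(1-\epsilon)x_i$ for all large $i$, and likewise for $Y$. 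On the consistency side, the trusted hint picks the better of the two strategies for each target, so $c=\sup_t \min\{d(X,t),d(Y,t)\}/d(t)$.

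Next I would run the leapfrog of targets $P_i,Q_i$ set up before the statement. Consider a late turn of $X$, i.e.\ the target $P_i$ at distance $x_{i-1}$ on branch $\beta_i$. Its $X$-cost is $T_X^i=2\sum_{j=0}^i x_j+x_{i-1}$, and applying the corollary to $\sum_{j=0}^{i-1}x_j$ gives $T_X^i/x_{i-1}\ge 2\tfrac{b_9}{b_9-1}+1-\epsilon=5-\epsilon$. Thus for large $i$ the turning strategy $X$ is \emph{worse} than $c$ at $P_i$, so consistency forces $Y$ to cover $P_i$ within ratio $c$: $\tau_i\le c\,x_{i-1}$. Writing $\tau_i=2\Sigma_Y+x_{i-1}$, where $\Sigma_Y$ is the sum of $Y$'s segments preceding iteration $j_i$, and using the corollary for $Y$ in the form $\Sigma_Y\ge(1-\epsilon)y_{j_i}$, I obtain $2(1-\epsilon)y_{j_i}+x_{i-1}\le c\,x_{i-1}$, i.e.\ the frontier advances by a bounded factor $y_{j_i}/x_{i-1}<\tfrac{c-1}{2(1-\epsilon)}$. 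The symmetric argument applied to $Q_i$ (where now $Y$ is the worse, near-$5$ strategy and $X$ must cover) bounds the next advance by the same factor. On each branch this produces an interleaved, increasing sequence of turn distances whose consecutive ratios are all asymptotically below $\tfrac{c-1}{2}$.

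Finally I would cash this in against robustness. Each strategy owns every other turn of the interleaved sequence, so the distances at which, say, $X$ turns on a fixed branch grow by a factor below $\bigl(\tfrac{c-1}{2}\bigr)^2$ per step; hence the growth rate satisfies $\alpha_X\le \tfrac{c-1}{2}$. If $c<5=2b_9+1$ then $\alpha_X<b_9=2$, and since $\alpha^2/(\alpha-1)$ is strictly minimized (value $4$) at $\alpha=2$, the Gal-type lower bound $\mathrm{cr}(X)\ge 1+2\,\alpha_X^2/(\alpha_X-1)$ — the geometric specialization of Theorem~\ref{thm:sch} — gives $\mathrm{cr}(X)>9$, contradicting the $9$-robustness of $X$. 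Therefore $c\ge 5$. I expect the main obstacle to be the bookkeeping of the interleaving: one must argue that the two strategies genuinely \emph{alternate} on each branch in the regime that matters (if one strategy is always preferred, then consistency already equals its competitive ratio, which is at least $9$), and that the per-step advance bound and the partial-sum estimates hold uniformly as the indices, and hence the target distances, tend to infinity — which is precisely where the $\epsilon$'s and the ratio $q=r/C$ are used to keep the asymptotics tight.
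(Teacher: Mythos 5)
Your proposal is correct, but it closes the argument by a genuinely different mechanism than the paper. The two proofs share the same opening: both fixed strategies $X,Y$ must be $9$-robust, the target $P_i$ forces $\tau_i\le c\,x_{i-1}$, and Corollary~\ref{cor:sum} with $b_9=2$ yields $y_{j_i}\le\frac{c-1}{2(1-\epsilon)}x_{i-1}$, exactly the paper's bound $y_{j_i}\le 2x_{i-1}$. From there the paper takes a \emph{single} further step: it places one more target at $Q_i$, invokes the property that for a $9$-robust strategy $T_X^i$ is arbitrarily close to $9x_{i-1}$ along a subsequence (footnoted in the paper as specific to $r=9$, since $9$ is the optimal ratio), deduces $D_X^i\ge 8x_{i-1}+y_{j_i}$, and reads off $D_X^i/y_{j_i}\ge 1+8/2=5$ directly; it also needs a case dichotomy ($\tau_i$ versus $\frac{1}{q}T_X^i$) that your version dispenses with, because your bound $T_X^i\ge(5-\epsilon)x_{i-1}$ holds for \emph{all} large $i$. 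You instead iterate the leapfrog indefinitely, convert the per-step advance bound into growth-rate control $\alpha_X\le\frac{c-1}{2}<2$, and finish with Theorem~\ref{thm:sch} (the $p=1$, Gal-type specialization, already deployed in the paper in Section~\ref{sec:direction}) together with the strict minimality of $\alpha^2/(\alpha-1)$ at $\alpha=2$. This trade is coherent: the paper's proof is shorter and avoids the functional machinery but leans on the tightness-at-infinity of $9$-robustness; yours avoids that footnoted property but imports Theorem~\ref{thm:sch}. Your route also makes transparent \emph{why} $r=9$ is rigid: for $r>9$ the interval of $r$-robust geometric bases is nondegenerate, so a growth-rate cap below $b_r$ no longer contradicts robustness — consistent with the paper needing the extra ``asymptotic'' hypothesis in Theorem~\ref{thm:asymptotic}, whose threshold $1+\frac{2b_r}{b_r-1}$ is precisely your universal frontier-cost bound evaluated at general $r$.

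Three small points to make explicit, none of which is a genuine gap. First, your derivation ``applying the corollary to $\sum_{j=0}^{i-1}x_j$'' should be routed through index $i-1$: the corollary gives $\sum_{j=0}^{i-2}x_j\ge(1-\epsilon)x_{i-1}$, whence $T_X^i\ge 2\bigl(\sum_{j=0}^{i-2}x_j+x_{i-1}\bigr)+x_{i-1}\ge(5-2\epsilon)x_{i-1}$; applying it at index $i$ only compares the sum with $x_i$, and $x_i\ge x_{i-1}$ is \emph{not} guaranteed (the model only assumes $x_{i+2}\ge x_i$). Second, in passing from the per-step factor to $\alpha_X$, use per-branch monotonicity $x_{n+2}\ge x_n$ to sandwich any of $X$'s intermediate turns between consecutive marked ones (extra turns only lower $x_n^{1/n}$), and run a separate leapfrog for each parity so that both branches of $X$ are controlled — both work, as you anticipated. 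Third, Theorem~\ref{thm:sch} requires $\sup_n x_{n+1}/x_n<\infty$, which Lemma~\ref{lemma:helper.bb} supplies for any $9$-robust strategy, and the degenerate case $\alpha_X\le 1$ gives unbounded robustness outright.
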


\begin{proof}
We will prove the result by way of contradiction. Let $C=5$, and suppose that there is a strategy $Z$ of consistency strictly less than $C$. Let $Z$ be determined by two fixed strategies $X$ and $Y$. Both $X$ and
$Y$ must be $r$-robust (i.e., 9-robust), otherwise $Z$ cannot be $r$-robust.

Fix $i_0 \in \mathbb{N}$. Suppose first that $i_0$ is such that for all $i \geq i_0$, we have  $\tau_i \geq \frac{1}{q} T_X^i$. In this case, for a target at position $P_i$, defined earlier, $X$ requires time $T_X^i$ to locate it, whereas $Y$ requires time at least $\tau_i \geq \frac{1}{q} T_X^i$ to locate it, thus the minimum time $X$ or $Y$ can locate this target is $\frac{1}{q} T_X^i$. Therefore, the consistency of $Z$ is at least
\begin{equation}
c \geq \sup_{i \geq i_0} \frac{T_X^i}{q \cdot x_{i-1}}= \frac{1}{q} \sup_{i \geq i_0} \frac{T_X^i}{x_{i-1}} \geq \frac{C}{r} \cdot  r =C,
\label{eq:supremum.9} 
\end{equation}
which is a contradiction. Here, we used crucially the fact that $\sup_{i \geq i_0} \frac{T_X^i}{x_{i-1}} \geq 9$, for any 9-robust strategy $X$ and any $i_0$\footnote{In general, this statement is not immediately true for arbitrary $r>9$.}. Specifically, there exists sufficiently large $i$ such that  $T_X^i$ is arbitrarily close to $9x_{i-1}$.

It must then be that $i_0$ does not obey the property described above, namely for some $i\geq i_0$ we have that
$\tau_i \leq \frac{1}{q} T_X^i$. Since $X$ is $r=9$-robust, it must also be that $\frac{T_X^{i}}{x_{i-1}}\leq r$, as can be seen if a target hides at $P_i$. We therefore obtain that
\begin{equation}
\tau_i \leq \frac{1}{q} T_X^i \leq \frac{1}{q} \cdot r \cdot x_{i-1}= C x_{i-1}.
\label{eq:tau.9.upper}
\end{equation}
We can also give a lower bound on $\tau_i$, as follows. Recall that we denote by $y_{j_i}$ the segment at which strategy $Y$ locates a target at position $P_i$. For arbitrarily small $\epsilon>0$, we can choose $i_0$ sufficiently large, which also implies that $j_i$ can also be sufficiently large (since otherwise $Y$ would not have finite robustness), so that Corollary~\ref{cor:sum} applies. To simplify the arguments, in the remainder of the proof we will assume that the corollary applies with $\epsilon=0$; this has no effect on correctness, since we want to show a lower bound of the form $C-\delta$ on the consistency, and $\epsilon$ can be made as small as we want in comparison to $\delta$.  More precisely, we obtain that
\begin{equation}
\tau_i =2 \sum_{l=0}^{j_i} y_l +x_{i-1} \geq \frac{2}{b_r-1}y_{j_i} +x_{i-1}.
\label{eq:tau.9.lower}
\end{equation}
Combining~\eqref{eq:tau.9.upper} and~\eqref{eq:tau.9.lower} we have
\begin{equation}
y_{j_i} \leq \frac{C-1}{2}(b_r-1) x_{i-1}.
\label{eq:l_i.9}
\end{equation}
In particular, since $r=9$ and $C=5$, we have that $y_{j_i} \leq 2 x_{i-1}$.

Consider now a target at position $Q_i$, and recall that this position is at distance infinitesimally larger 
than $y_{j_i}$. We will show that in both $X$ and $Y$, there exists an $i\geq i_0$ such that the searcher walks distance at least $C \cdot y_{j_i}$ before reaching this position, which implies that $Z$ has consistency at least $C$, and which yields the contradiction.

Consider first strategy $Y$. In this case, the searcher walks distance at least $T_Y^{j_i}$, to reach $Q_i$, from the definition of $T_Y$. Since $r=9$, we know that $\sup_{i\geq i_0} \frac{T_Y^{j_i}}{y_{j_i}}\geq 9$, for any $i_0$, hence there exists an $i \geq i_0$ such that the distance walked by the searcher is at least $r \cdot y_{j_i}$, and hence at least $C \cdot y_{j_i}$.

Consider now strategy $X$. In this case, in order to arrive at position $Q_i$, the searcher needs to walk distance 
$T_X^i$, then at least an additional distance $y_{j_i}-x_{i-1}$ to reach $Q_i$.
Let us denote by $D_X^i$ this distance. We have
\begin{align}
D_X^i &= T_X^i +y_{j_i}-x_{i-1} \\
&\geq 9 x_{i-1}+y_{j_i}-x_{i-1} \tag{From Corollary~\ref{cor:sum} and since $T_X^i$ is arbitrarily close to
$9x_{i-1}$} \\
&= 8x_{i-1} +y_{j_i}.
\label{eq:D_X.9}
\end{align}

We then bound the ratio $D_X^i /y_{j_i}$ from below as follows.
\begin{align}
\frac{D_X^i}{y_{j_i}} &\geq \frac{8x_{i-1} +y_{j_i}}{y_{j_i}}  \label{eq:final.9}
\\
&= 1+ 8 \frac{x_{i-1}}{y_{j_i}}  \nonumber  \\
&\geq 1+ 8 \frac{x_{i-1}}{2 \cdot x_{i-1}} 
\tag{From the fact that $y_{j_i} \leq 2 \cdot x_{i-1}$} \nonumber \\
&=5.
\end{align}
We thus conclude that $C\geq 5$, which yields the contradiction, and completes the proof.
\end{proof}

Showing a lower bound on the consistency, as a function of general $r>9$ is quite hard, even for the case of 
a single-bit hint. The reason is that as $r$ increases, so does the space of $r$-robust strategies. For example,
any geometric strategy $G_b$ has robustness $r$, as long as $b\in [\frac{\rho_r-\sqrt{\rho_r^2-4\rho_r}}{2}, \frac{\rho_r+\sqrt{\rho_r^2-4\rho_r}}{2}]$. In what follows we will show a lower
bound for a class of strategies which we call {\em asymptotic}. More precisely, recall the definition of $T_X^i$. We call an $r$-robust strategy $S$
{\em asymptotic} if $\sup_{i \geq i_0} \frac{T_X^i}{x_{i-1}}=r$, for all fixed $i_0$. In words, in an asymptotic strategy, the worst-case robustness (i.e., the worst case competitive ratio without any hint) can always be attained by targets placements sufficiently far from the root. All geometric strategies, including the doubling strategy, have this property, and this holds for many strategies that solve search problems on the line and the star, such as the ones described in the introduction. 
Note also that the strategies $X_1$ and $X_2$ that determine the strategy $Z$ in the statement of Proposition~\ref{prop.single.upper} are asymptotic, since they are near-geometric. Thus, the lower bound we show in the next theorem implies that in order to substantially improve consistency, one may have to resort to much more complex, and most likely irregular strategies.

\begin{theorem}
Let $Z$ denote a strategy with 1-bit hint which is determined by two $r$-robust, asymptotic strategies $X$ and
$Y$. Then $Z$ is $(c,r)$-competitive, with 
$c \geq 1+\frac{2b_r}{b_r-1}$. 
\label{thm:asymptotic}
\end{theorem}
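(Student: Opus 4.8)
The plan is to adapt the structure of the proof of Theorem~\ref{thm:bit.9.upper}, which was deliberately written in a parameterized fashion (keeping $r$ symbolic rather than substituting $9$), but to target the general consistency bound $C = 1 + \frac{2b_r}{b_r-1}$ for arbitrary $r \ge 9$. As before, I would argue by contradiction: assume $Z$ has consistency strictly less than $C$, where $Z$ is determined by two $r$-robust asymptotic strategies $X$ and $Y$ and a single bit $h$. Both $X$ and $Y$ must themselves be $r$-robust. The key advantage of the \emph{asymptotic} hypothesis is precisely that it guarantees $\sup_{i \ge i_0} \frac{T_X^i}{x_{i-1}} = r$ for every fixed $i_0$; this is the property that the footnote in Theorem~\ref{thm:bit.9.upper} flagged as \emph{not} automatic for general $r$, and it is what makes the asymptotic restriction the right setting in which a clean lower bound survives.

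Next I would run the same dichotomy on a fixed threshold index $i_0$, with $q = r/C$. In the first case, where $\tau_i \ge \frac{1}{q} T_X^i$ for all $i \ge i_0$, a target at position $P_i$ is located by the better of $X,Y$ in time at least $\frac{1}{q} T_X^i$, so $c \ge \frac{1}{q}\sup_{i\ge i_0}\frac{T_X^i}{x_{i-1}} = \frac{C}{r}\cdot r = C$, a contradiction; here the asymptotic property supplies the needed supremum equal to $r$. Otherwise there is some $i \ge i_0$ with $\tau_i \le \frac{1}{q}T_X^i \le \frac{1}{q}\, r\, x_{i-1} = C\,x_{i-1}$, using $r$-robustness of $X$ at $P_i$. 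Combining this upper bound on $\tau_i$ with the Corollary~\ref{cor:sum} lower bound $\tau_i \ge \frac{2}{b_r-1} y_{j_i} + x_{i-1}$ yields the general analogue of~\eqref{eq:l_i.9}, namely $y_{j_i} \le \frac{C-1}{2}(b_r-1)\, x_{i-1}$.

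Then I consider a target at position $Q_i$, at distance just beyond $y_{j_i}$. Under strategy $Y$, the asymptotic property gives $\sup \frac{T_Y^{j_i}}{y_{j_i}} = r \ge C$, so $Y$ already walks at least $C\, y_{j_i}$ for some large $i$. Under strategy $X$, the distance to reach $Q_i$ is $D_X^i = T_X^i + y_{j_i} - x_{i-1}$; using $T_X^i \to r\,x_{i-1} = (2\rho_r+1)x_{i-1}$ (asymptotic property again) I get $D_X^i \ge (2\rho_r)x_{i-1} + y_{j_i}$, hence $\frac{D_X^i}{y_{j_i}} \ge 1 + \frac{2\rho_r\, x_{i-1}}{y_{j_i}}$. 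The main calculational step is to choose $C$ so that this last expression is forced to be at least $C$: substituting the worst case $y_{j_i} = \frac{C-1}{2}(b_r-1)x_{i-1}$ gives $\frac{D_X^i}{y_{j_i}} \ge 1 + \frac{2\rho_r}{\frac{C-1}{2}(b_r-1)}$, and I would verify that setting $C = 1 + \frac{2b_r}{b_r-1}$ makes this a consistent fixed point, using the identity $\rho_r = \frac{b_r^2}{b_r-1}$ from Section~\ref{sec:prelim}. Either way, both branches force consistency at least $C$, contradicting the assumption.

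The main obstacle I anticipate is the final fixed-point verification: unlike the $r=9$ case, where $y_{j_i}\le 2x_{i-1}$ fell out cleanly, here one must confirm that the value $C = 1+\frac{2b_r}{b_r-1}$ is exactly the self-consistent choice making the two bounds ($Y$ forcing $\ge C$ and $X$ forcing $\ge C$ via the worst-case $y_{j_i}$) simultaneously tight. This amounts to an algebraic identity in $b_r$ via $\rho_r = b_r^2/(b_r-1)$, and care is needed because the bound on $y_{j_i}$ and the target bound on $\frac{D_X^i}{y_{j_i}}$ both depend on $C$, so one is really solving for the threshold $C$ at which the contradiction just closes. I would handle this by treating $C$ as the unknown, writing the contradiction condition $1 + \frac{2\rho_r}{\frac{C-1}{2}(b_r-1)} \ge C$, and checking it holds for all $C < 1+\frac{2b_r}{b_r-1}$ so that the infimum over admissible consistencies is exactly this value.
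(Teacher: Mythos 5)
Your proposal follows essentially the same route as the paper's proof: the same contradiction scheme inherited from Theorem~\ref{thm:bit.9.upper} with the asymptotic hypothesis restoring $\sup_{i\geq i_0} T_X^i/x_{i-1}=r$ in both the dichotomy step and the analysis of $Y$ at $Q_i$, the same bound $y_{j_i}\leq \frac{C-1}{2}(b_r-1)x_{i-1}$ via Corollary~\ref{cor:sum}, and the same closing algebra, since your fixed-point condition $1+\frac{4\rho_r}{(C-1)(b_r-1)}\geq C$ is exactly the paper's inequality $C\geq 1+\sqrt{2(r-1)/(b_r-1)}$ rearranged (using $2\rho_r=r-1$ and $\rho_r=b_r^2/(b_r-1)$). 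The argument is correct; your explicit verification that the contradiction closes for all $C<1+\frac{2b_r}{b_r-1}$ is just a slightly more careful phrasing of the paper's ``solving this inequality for $C$'' step.
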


\begin{proof}
We show how to modify the proof of Theorem~\ref{thm:bit.9.upper}. 
Let $C$  be equal to $1+\frac{2b_r}{b_r-1}$, and suppose, by way of contradiction, that the robustness of $Z$ is strictly less than $C$. 
First, we note that~\eqref{eq:supremum.9} applies since $X$ is asymptotic, and so do 
equations~\eqref{eq:tau.9.upper},~\ref{eq:tau.9.lower} and~\eqref{eq:l_i.9}. 

As in the proof of Theorem~\ref{thm:bit.9.upper}, we next consider a target hiding at position $Q_i$. We then can argue that there exists $i \geq i_0$ such that, for this hiding position, the total distance walked by $Y$ is at least $r \cdot y_{j_i} \geq C y_{j_i}$. Here we use the fact that $Y$ is asymptotic.

Next we consider strategy $X$, and we bound its cost for locating the hiding position $Q_i$. We have, similarly to the proof of 
Theorem~\ref{thm:bit.9.upper}, that

\begin{align}
D_X^i = T_X^i +y_{j_i}-x_{i-1} 
\geq (r-1)x_{i-1}+ y_{j_i}.
\tag{Since $X$ is asymptotic} \nonumber
\label{eq:D_X.9.asymptotic}
\end{align}

Therefore
\begin{align*}
R \geq \frac{D_X^i}{y_{j_i}}  \geq 1+\frac{(r-1)x_{i-1}}{y_{j-1}} \geq 1+\frac{2(r-1)}{(C-1)(b_r-1)},
\end{align*}
where the last inequality follows from~\eqref{eq:l_i.9}. Solving this inequality for $C$ we obtain
\[
C \geq 1+\sqrt{\frac{2(r-1)}{b_r-1}},
\]
and substituting with $r=1+2\frac{b_r^2}{b_r-1}$ we obtain that $C \geq 1+\frac{2b_r}{b_r-1}$, a contradiction, and the proof is complete.
\end{proof}

\subsection{$k$-bit hints}
\label{subsec:k}

Here we consider the general setting in which the hint is a $k$-bit string, for some fixed $k$. First, we give an upper bound that generalizes Proposition~\ref{prop.single.upper}. We will adapt an algorithm proposed in~\cite{DBLP:conf/innovations/0001DJKR20} for the {\em online bidding} problem with untrusted advice. 
Consider $2^k$ strategies $X_0, \ldots ,X_{2^k-1}$, where 
\[
X_j = (a^{i+\frac{j}{2^k}})_{i\geq 0},
\]
for some $a$ to be determined later, and where all the $X_j$ have the same parity: they all search the same branch in their first iteration and, therefore in every iteration as well. Define a strategy $Z$, which is determined by
$X_0, \ldots ,X_{2^k-1}$, and in which the $k$-bit hint $h$ dictates the index $j$ of the chosen strategy $X_j$. 
In other words, $h$ answers the query $Q_h$=``which strategy among $X_0, \ldots ,X_{2^k-1}$ should the searcher choose?''. An equivalent interpretation is that the statements of the $X_j$'s induce a partition of the line, such that for every given target position, one of the $X_j$'s is the preferred strategy. Thus every bit $i$ of the hint can be thought, equivalently, as the answer to a partition query $Q_i$ of the line, i.e., of the form ``does the target belong in a subset $S_i$ of the line or not?''. 

\begin{theorem}
For every $r\geq 9$, the strategy $Z$ defined above is $(c,r)$-competitive
with $c\leq1+2\frac{a^{1+\frac{1}{2^k}}}{a-1}$, where
\[
      a= \left\{
      \begin{array}{ll}
        b_r
        & \mbox{if } \rho_r \leq \frac{(1+2^k)^2}{2^k}
        \\[1em]
        
        1+2^k,
        &\mbox{if }  \rho_r \geq \frac{(1+2^k)^2}{2^k}.
      \end{array}
      \right.
 \]

\label{prop:k.upper}
\end{theorem}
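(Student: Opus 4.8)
The plan is to treat robustness and consistency separately and then optimise over the base $a$. For robustness, I would first observe that each component strategy $X_j=(a^{i+j/2^k})_{i\ge 0}$ is the geometric strategy $G_a$ with every search length scaled by the constant $a^{j/2^k}$. Since a uniform scaling does not change the competitive ratio, each $X_j$ has competitive ratio exactly $\comp(G_a)=1+2\frac{a^2}{a-1}$ by~\eqref{eq:comp.geometric}. Because $Z$ always executes precisely one of the $X_j$ (the one named by the hint), under \emph{any} hint the distance travelled is bounded by $\comp(X_j)$; hence $Z$ is $r$-robust as soon as $1+2\frac{a^2}{a-1}\le r$, i.e.\ $\frac{a^2}{a-1}\le\rho_r$. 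I would then verify this for both branches of the definition of $a$: for $a=b_r$ it holds with equality since $\comp(G_{b_r})=r$, and for $a=1+2^k$ it holds exactly because the case hypothesis is $\frac{(1+2^k)^2}{2^k}\le\rho_r$.

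For consistency, fix a trusted hint and a target at distance $d$, say on branch $0$; the other branch is symmetric since all $X_j$ alternate in lockstep. As every strategy searches branch $0$ in its even iterations, $X_j$ locates the target at the first even iteration $2i_j$ whose turn distance $L_j:=a^{2i_j+j/2^k}$ satisfies $L_j\ge d$, at cost $2\sum_{m=0}^{2i_j-1}a^{m+j/2^k}+d$. The key simplification is that this sum telescopes: $2\sum_{m=0}^{2i_j-1}a^{m+j/2^k}=2a^{j/2^k}\frac{a^{2i_j}-1}{a-1}<\frac{2L_j}{a-1}$, so the cost of $X_j$ is at most $\frac{2L_j}{a-1}+d$, a clean increasing function of the overshoot $L_j$ alone. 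The trusted hint does at least as well as the strategy achieving the smallest branch-$0$ turn distance $L_{\min}(d)$ that still dominates $d$, so $Z$ is $\bigl(1+\frac{2}{a-1}\sup_d\frac{L_{\min}(d)}{d}\bigr)$-consistent.

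The crux, and the step I expect to be most delicate, is evaluating $\sup_d\frac{L_{\min}(d)}{d}$. I would argue that the branch-$0$ turn distances, collected over all $2^k$ strategies, are exactly the powers $a^{e}$ with $e=2i+j/2^k$; these exponents fill each block $[2i,2i+1)$ on a grid of spacing $1/2^k$ but skip the intervening blocks $[2i+1,2i+2)$ entirely. This skipping is precisely the effect of the branch alternation, and it is what separates the bound from the naive $a^{1/2^k}$. Consequently $\log_a L_{\min}(d)$ is the least such exponent that is $\ge\log_a d$, and the worst case arises when $\log_a d$ sits just above the top exponent $2i+\frac{2^k-1}{2^k}$ of a block, forcing a jump to $2i+2$; this gap equals $2-\frac{2^k-1}{2^k}=1+\frac{1}{2^k}$, which dominates the within-block spacing $\frac{1}{2^k}$. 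Hence $\sup_d\frac{L_{\min}(d)}{d}=a^{1+1/2^k}$, yielding consistency at most $1+2\frac{a^{1+1/2^k}}{a-1}$.

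Finally I would optimise the base. Writing $\beta=1/2^k$ and $f(a)=\frac{a^{1+\beta}}{a-1}$, the sign of $\frac{f'(a)}{f(a)}=\frac{\beta a-(1+\beta)}{a(a-1)}$ shows that $f$ is strictly decreasing on $(1,1+2^k)$ and increasing afterwards, with unique minimiser $a^\star=\frac{1+\beta}{\beta}=1+2^k$. The robustness constraint $\frac{a^2}{a-1}\le\rho_r$ confines $a$ to an interval whose upper endpoint is the larger root $b_r$ of $a^2-\rho_r a+\rho_r$. Evaluating this quadratic at $1+2^k$ shows that $1+2^k\le b_r$ is equivalent to $\rho_r\ge\frac{(1+2^k)^2}{2^k}$: in that regime the unconstrained optimum is feasible and I take $a=1+2^k$, while otherwise $f$ is still decreasing throughout the feasible interval and the optimum is at the endpoint $a=b_r$. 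This reproduces the two-case definition of $a$ and completes the proof; note that Proposition~\ref{prop.single.upper} is the special case $k=1$.
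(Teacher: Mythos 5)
Your proposal is correct and takes essentially the same approach as the paper: both bound the consistency via the worst-case multiplicative overshoot $a^{1+1/2^k}$ between consecutive turn points on the target's branch (the paper encodes your gap analysis directly as the bracketing $a^{i_t+j_t/2^k-1-1/2^k} < d(t) \le a^{i_t+j_t/2^k}$), bound robustness by $1+2\frac{a^2}{a-1}\le r$ since each $X_j$ is a scaled copy of $G_a$, and then minimize $\frac{a^{1+1/2^k}}{a-1}$ subject to $\frac{a^2}{a-1}\le\rho_r$. The only difference is that you spell out what the paper dismisses as ``simple calculus,'' namely the unconstrained minimizer $a^\star=1+2^k$ and the feasibility test $\rho_r\ge\frac{(1+2^k)^2}{2^k}$ obtained by evaluating the quadratic $a^2-\rho_r a+\rho_r$ at $1+2^k$, which is a welcome but not substantively different elaboration.
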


\begin{proof}
For a given target $t$, let $j_t=h$, that is, $j_t$ is the index of the best strategy among the $X_j$'s for located $t$, as dictated by $h$. From the statements of the strategies, this implies that there exists some $i_t$ such that
\[
a^{i_t+\frac{j_t}{2^k}-1-\frac{1}{2^k}} < d(t) \leq a^{i_t+\frac{j_t}{2^k}},
\]
Then 
\[
\frac{d(X_{j_t},t)}{d(t)} = \frac{2\sum_{l=0}^{i_t-1}a^{l+\frac{j_t}{2^k}}+d(t)} {d(t)} \leq
1+2 \frac{\sum_{l=0}^{i_t-1} a^{l+\frac{j_t}{2^k}} }
{a^{i_t+\frac{j_t}{2^k}-1-\frac{1}{2^k}}} \leq 1+2\frac{a^{1+\frac{1}{2^k}}}{a-1}.
\]
It remains to chose the right value for $a$. Recall that the geometric strategy $G_a$ is 
$(1+2\frac{a^2}{a-1})$-robust, thus so is every $X_j$ defined earlier. We then require that
$a$ is such that 
\[
1+2\frac{a^2}{a-1} \leq r=1+2\rho_r \Rightarrow \frac{a^2}{a-1} \leq \rho_r.
\]
We also require that the consistency of $Z$, as bounded earlier, is minimized. Therefore, we seek 
$a>1$ such that $\frac{a^{1+\frac{1}{2^k}}}{a-1}$ is minimized and $\frac{a^2}{a-1} \leq \rho_r$.
Using simple calculus, we obtain that the value of $a$ that satisfies the above constraints is as in the
statement of the theorem. 
\end{proof}

For example, for $r=9$, we obtain a strategy that is $(1+2^{2+\frac{1}{2^k}},9)$-competitive. Thus, the consistency decreases rapidly, as function of $k$, and approaches 5. 

Last, it is easy to see that no 9-robust strategy with hint string of any size can have consistency better than 3. To see this, let $X$ be any 9-robust strategy, and let $i_t$ be the iteration in which it locates a target $t$. Since $t$ can be arbitrarily far from $O$, $i_t$ is unbounded, and thus Corollary~\ref{cor:sum} applies. We thus have that
\[
\frac{d(X,t)}{d(t)} =\frac{\sum_{j=0}^{i_t-1}x_j+d(t)}{d(t)}=1+2\frac{\sum_{j=0}^{i_t-1}x_j}{d(t)}
\geq 1+2\frac{\sum_{j=0}^{i_t-1}x_j}{x_{i_t}} \geq 1+2 \frac{(1-\epsilon)x_{i_t}}{x_{i_t}},
\]
thus the consistency of $X$ cannot be smaller that $3-\epsilon$, for any $\epsilon$. The same holds then for any strategy that is determined by any number of $9$-robust strategies, and thus for any hint.

\bibliographystyle{plain}
\bibliography{targets}

\end{document}